\definecolor{LavanderPosta}{HTML}{ebd9fc}
\let\oldnl\nl
  \newcommand\nonl{%
    \renewcommand{\nl}{\let\nl\oldnl}}
\newlist{todolist}{itemize}{2}
\setlist[todolist]{label=\huge$\square$}
\newcommand{\aCMq}{\aCM^{\mathit{QoS}}}
\newcommand{\grpOOp}{{\colorOp \lceil}}
\newcommand{\grpCOp}{{\colorOp \rfloor}}
\newcommand{\ggrp}[1][\aG]{\grpOOp #1 \grpCOp}
\newif\ifemi
\newcommand{\eMcomm}[2][check]{%
  \ifthenelse{\equal{#1}{new}}{{\color{red}#2}}{%
         \ifthenelse{\equal{#1}{changed}}{{\color{teal}{#2}}}{%
                \todo[color=orange!20]{\tiny eM: \color{NavyBlue}#1}%
                {\color{OliveGreen}{#2}}%
         }%
  }%
}
\newcommand{\tnxbehapi}[1][partly]{
  Research {#1} supported by the EU
  H2020 RISE programme under the Marie Skłodowska-Curie grant
  agreement No 778233.
}
\newcommand{\tnxastra}[1][partly]{%
  Research {#1} supported by the PRO3 MUR project
  Software Quality, and PNRR MUR project VITALITY (ECS00000041), Spoke
  2 ASTRA - Advanced Space Technologies and Research Alliance.
}
\newcommand{\ifempty}[3]{%
  \ifthenelse{\isempty{#1}}{#2}{#3}%
}
\newcommand{\mkfun}[4][\colorFun]{
  \newcommand{#2}[1][#4]{
    {#1\textsf{#3}}
    \ifempty{##1}{}{
      ({##1})}
  }
}
\newcommand{\mkuop}[4][\colorFun]{
  \newcommand{#2}[1][#4]{
    {#1\textsf{#3}}
    \ifempty{##1}{}{
      \, {##1}}
  }
}
\newcommand{\hidden}[1]{}
\newcommand{\hide}[1]{}
\newcommand{\cf}[2]{
  \fontsize{#1}{#1}{\selectfont{#2}}
}
\newcommand{\emi}[2]{
  \marginpar{\fcolorbox{red}{shadecolor}{\cf{#1}{{#2}}}}
}
\newcommand{\emic}[2]{\par
  \fcolorbox{red}{shadecolor}{\parbox{\linewidth}{ 
      \color{gray}
      \begin{description}
      \item[{\color{blue} #2}]{\sf #1}
      \end{description}}}
}
\newcommand{\emi}[2]{}
\newcommand{\emic}[2]{}{}
\newcommand{\sst}{\;\big|\;}
\newcommand{\conf}[1]{\ensuremath{\langle {#1} \rangle}}
\newcommand{\bnfdef}{\ ::=\ }
\newcommand{\bnfmid}{\;\ \big|\ \;}
\newcommand{\qqand}[1][and]{\qquad\text{#1}\qquad}
\newcommand{\quo}[1]{\lq\lq {#1}\rq\rq}
\def\finex{{\unskip\nobreak\hfil
\penalty50\hskip1em\null\nobreak\hfil$\diamond$
\parfillskip=0pt\finalhyphendemerits=0\endgraf}}
\definecolor{shadecolor}{rgb}{1,0.99,0.9}
\definecolor{bg}{rgb}{0.95,0.95,0.95}
\newcommand{\gsubs}[2]{^{#1} / _{#2}}
\newcommandx{\gsubst}[3][1=\aM,2=q,3=q',usedefault=@]{
  \left \{\gsubs{#3}{#2} \right \}#1
}
\newcommandx{\gsubsts}[5][1=\aM,2=q,3=q',4=q,5=q',usedefault=@]{
  \left \{\gsubs{#3}{#2}, \gsubs{#5}{#4} \right \}#1
}
\newcommandx{\wellpar}[2][1={\aG},2={\aG'},usedefault=@]{\mathit{wf}({#1}, {#2})}
\newif\ifcp
\newcommand{\gname}[1][i]{\ifcp{\colorNode{\scriptstyle\textsf{#1}}}\else{}\fi}
\newif\ifguard
\newcommand{\aguard}{\ifguard{\colorGuard \phi}\else{}\fi}
\def\colorGuard{\color{cyan}}
\def\colorPtp{\color{blue}}
\def\colorFun{\color{Navy}}
\def\colorOp{\color{OliveGreen}}
\def\colorNode{\color{LightCoral}}
\def\colorR{\color{OliveGreen}}
\def\colorE{\color{orange}}
\def\colorMsg{\color{BrickRed}}
\newcommand{\fillcolor}{orange!5}
\newcommand{\toolidcol}{red!10!blue!90}
\newcommand{\toolid}[1]{\textcolor{\toolidcol}{\textbf{\textsf{#1}}}}
\newcommand{\chorgramsite}{\url{https://bitbucket.org/eMgssi/chorgram/src/master/}}
\newcommand{\chorgram}{\toolid{ChorGram}}
\newcommand{\msg}[1][m]{\mathsf{\colorMsg{#1}}}
\newcommand{\msgset}{\mathcal{\colorMsg{M}}}
\newcommand{\chset}{\mathcal{C}}
\newcommand{\lset}{\mathcal{L}}
\newcommand{\ptp}[1][A]{\ensuremath{\mathsf{\colorPtp{\capitalisewords{#1}}}}}
\newcommand{\p}{\ptp}
\newcommand{\q}{{\ptp[B]}}
\newcommand{\s}{{\ptp[S]}}
\newcommand{\sndint}[1][{\gint[]}]{\mathrm{snd}(#1)}
\newcommand{\rcvint}[1][{\gint[]}]{\mathrm{rcv}(#1)}
\newcommandx{\ggcommon}[3][1=\ptp,2={\aH},3={\aH'},usedefault=@]{f_{#1}}
\newcommandx{\opair}[2][1={\ae},2={\ae'},usedefault=@]{\conf{{#1},{#2}}}
\newcommandx{\hopair}[2][1={\aE},2={\aE'},usedefault=@]{\llparenthesis\, {#1},{#2}\, \rrparenthesis}
\newcommandx{\wf}[2][1={\aG},2={\aG'},usedefault=@]{wf({#1}, {#2})}
\newcommandx{\wb}[2][1={\aG},2={\aG'},usedefault=@]{wb({#1}, {#2})}
\newcommandx{\ws}[2][1={\aG},2={\aG'},usedefault=@]{ws({#1}, {#2})}
\newcommandx{\widx}[2][1={\aW},2={i},usedefault=@]{{#1}[{#2}]}
\newcommandx{\outop}[2][1=\gname,2={}]{{\colorOp{!}}^{{#1}{#2}}}
\newcommandx{\inop}[2][1=\gname,2={}]{{\colorOp{?}}^{{#1}{#2}}}
\newcommandx{\aout}[4][1=a,2=b,3=m,4={},usedefault=@]{
  \achan[#1][#2] \outop[{}] {\msg[#3]} {#4}
}
\newcommandx{\ain}[4][1={\p},2={\q},3=m,4={},usedefault=@]{
  \achan[#1][#2] \inop[{}] {\msg[{#3}]}{#4}
}
\newcommandx{\adep}[1][1={}]{
  \conf{ \aout[@][@][@][@][{#1}], \ain[@][@][@][@][{#1}]}
}
\newcommandx{\hproj}[2][1=\aH, 2=\ptp, usedefault=@]{
  \ifempty{#1}{}{{#1}}\ifempty{#2}{}{{^{\scriptscriptstyle @{#2}}}}
}
\newcommandx{\eproj}[2][1=\aE,2=A, usedefault=@]{
  {{#1}}\ifempty{#2}{}{{^{\scriptscriptstyle @{{\ptp[#2]}}}}}
}
\tikzset{
  component/.style={
    draw,
    fill = white,
    minimum width = 1.5cm,
    minimum height = .5cm,
    drop shadow
  }
}
\tikzset{
  file/.style={
    thin,
    fill = blue!5,
    font = \tt\scriptsize,
    text width = .8cm,
    minimum width = 1.0cm,
    minimum height = .5cm,
    drop shadow
  }
}
\tikzset{
  dataflow/.style={
    thick,
    draw, ->, >=latex,
    dashed,
    OliveGreen
  }
}
\tikzset{
  pipeline/.style={
    thick,
    draw, ->, >=latex,
    double,
    red
  }
}
\tikzset{
  pomsetcloud/.style={
    cloud,
	 cloud puffs=20,
	 cloud ignores aspect,
	 minimum height=.1cm,
	 minimum width=2cm,
	 fill=blue!10,
	 opacity=.5,
	 draw
  }
}
\newcommand{\apom}{r}
\newcommand{\aR}[1][R]{{\colorR{#1}}}
\newcommand{\aConf}{s}
\newcommandx{\detM}[1][1=\aCM,usedefault=@]{\Delta({#1})}
\tikzset{
  cnode/.style={
    shape=circle,
    minimum size = 0mm,
    inner sep = 1pt,
    font=\tiny,
    draw
  },
  carrow/.style={
    ->,
    shorten >=1pt,
    >=stealth',
    auto,
    font=\scriptsize,
    draw,
    sloped
  }
}
\newcommandx{\choranimation}[4][1=1,2=1,3=notempty,4=.35,usedefault=@]{
  \begin{overlayarea}{#1\linewidth}{#2\textheight}
    \begin{tikzpicture}[
      node distance=1cm and 2cm,
      scale=#4,
      every node/.style={transform shape},
      font=\large
      ]
      \node [choreo, align=center] (global){Choreography $\aG$ \\ global viewpoint};
      \node [below= of global] (fake) {};
      \node [choreo, align=center, local, below =of fake] (typei) {$\aCM_1$ \\ Local viewpoint$_1$};
      \node [choreo, align=center, local, left =of typei] (type1) {$\aCM_i$ \\ Local viewpoint$_1$};
      \node [choreo, align=center, local, right=of typei] (typen) {$\aCM_n$ \\ Local viewpoint$_n$};
      \node<+-> (synctxt) at (9,0)  {\textcolor{Navy}{\bf Synchrony}};
      \node<.-> (asynctxt) at (9,-4) {\textcolor{Navy}{\bf Asynchrony}};
      \node<.-> [below=of type1] (fake1) {};
      \node<.-> [below=of typei] (fakei) {};
      \node<.-> [below=of typen] (faken) {};
      \path<+-> [bigar] (global) edge[sloped,above] node {\color{OliveGreen}Project} (type1);
      \path<.-> [bigar] (global) edge[sloped,above] node {\color{OliveGreen}Project} (typei);
      \path<.-> [bigar] (global) edge[sloped,above] node {\color{OliveGreen}Project} (typen);
      \path<.-> [elli] (type1) -- (typei);
      \path<.-> [elli] (typei) -- (typen);
      \node<+-> [process, below=of fake1] (proc1) {Component$_1$};
      \node<.-> [process, below=of fakei] (proci) {Component$_i$};
      \node<.-> [process, below=of faken] (procn) {Component$_n$};
      \path<.-> [bigar,->,dashed,gray] (type1) edge[sloped,above] node {Validate} (proc1);
      \path<.-> [bigar,->,dashed,gray] (typei) edge[sloped,above] node {Validate} (proci);
      \path<.-> [bigar,->,dashed,gray] (typen) edge[sloped,above] node {Validate} (procn);
      \path<.-> [elli] (proc1) -- (proci);
      \path<.-> [elli] (proci) -- (procn);
		\ifempty{#3}{}{
      \node<+-> [process, right=of procn,xshift=4cm] (evolve1) {Component'$_1$};
      \node<.-> [process, right=of evolve1] (evolvei) {Component'$_i$};
      \node<.-> [process, right=of evolvei] (evolven) {Component'$_n$};
      \path<.-> [bigar,blue,dotted] (procn) edge node [above] {evolve/replace/compose} (evolve1);      
      \path<.-> [elli] (evolve1) -- (evolvei);
      \path<.-> [elli] (evolvei) -- (evolven);
      \node<+-> [choreo, align=center, local, above=of evolve1, yshift=1.5cm] (t11) {New $\aCM'_1$ \\ Local viewpoint$_1$};
      \node<.-> [choreo, align=center, local, above=of evolvei, yshift=1.5cm] (t1i) {New $\aCM'_i$ \\ Local viewpoint$_i$};
      \node<.-> [choreo, align=center, local, above=of evolven, yshift=1.5cm] (t1n) {New $\aCM'_n$ \\ Local viewpoint$_n$};
      \path<.-> [elli] (t11) -- (t1i);
      \path<.-> [elli] (t1i) -- (t1n);
      \path<.-> [bigar,->,dashed,gray] (evolve1) edge[sloped,above] node {Extract} (t11);
      \path<.-> [bigar,->,dashed,gray] (evolvei) edge[sloped,above] node {Extract} (t1i);
      \path<.-> [bigar,->,dashed,gray] (evolven) edge[sloped,above] node {Extract} (t1n);
      \node<+> [above=of t1i, yshift=1.5cm] (qm) {\Huge \textcolor{red}{ ??? }};
      \node<.-> [above=of t1i] (dummy) {};
      \node<+-> [choreo,align=center,above=of dummy,scale=.85] (global') {New choreography $\aG'$ \\ global viewpoint};
      \path<.-> [bigar,-] (t11) -- (dummy);
      \path<.-> [bigar,->] (t1i) edge node[right,xshift=1em] {\color{OliveGreen}Synthesise} (global');
      \path<.-> [bigar,-] (t1n) -- (dummy);
		}
    \end{tikzpicture}
  \end{overlayarea}
}
\newcommandx{\cm}[2][1=\ptp, 2=\aM]{{#2}_{#1}}
\newcommandx{\achan}[2][1=A,2=B,usedefault=@]{{\ptp[#1]}{\,}{\ptp[#2]}}
\newcommand{\ptpset}{\mathcal{\colorPtp{P}}}
\newcommand{\PSet}{\ptpset}
\newcommand{\oact}{\outop[]}
\newcommand{\iact}{\inop[]}
\newcommand{\tset}{\to}
\newcommand{\csconf}[2]{\conf{\vec{#1} \ ; \ \vec{#2}}}
\newcommand{\chanset}{\chset}
\newcommand{\TRANSS}[1]{{\xRightarrow{\raisebox{-.3ex}[0pt][0pt]{$\scriptstyle #1$} }}}
\newcommand{\trans}[2][{}]{\,\xrightarrow{#2}_{#1}\,}
\newcommandx{\acfsmout}[3][1=A,2=B,3=m,usedefault=@]{\achan[{#1}][{#2}] \oact {\msg[{#3}]}}
\newcommandx{\acfsmin}[3][1=A,2=B,3=m,usedefault=@]{\achan[{#1}][{#2}] \iact {\msg[{#3}]}}
\newcommandx{\fsaout}[2][1={\p},2={},usedefault=@]{
  \ptp[#1] \ \outop[]\ \msg[{#2}]
}
\newcommandx{\fsain}[2][1={\p},2={},usedefault=@]{
  \ptp[#1] \ \inop[]\ \msg[{#2}]
}
\newcommand{\linenumfontsize}{\@setfontsize{\linenumfontsize}{3pt}{3pt}}
\lstdefinelanguage{sys}{
	commentstyle=\color{Gray},
	morecomment=[s]{[}{]},
	keywords=[0]{system,of,do,end},	keywordstyle=\color{orange}\bfseries,
}
\lstdefinelanguage{sgg}{
  commentstyle=\color{Gray},
  morecomment=[l]{..},
  morecomment=[s]{[}{]},
  keywords=[0]{repeat,branch,sel},
  keywordstyle=\color{orange}\bfseries,
  morekeywords=[1]{*,\+,|,->},
  literate={->}{$\colorOp \xrightarrow$}1 {|}{$\gparop$}1 {;}{$\gseqop$}1 {+}{$\gchoop$}1 {\{}{{\textcolor{Navy}{\{}}}1 {\}}{{\textcolor{Navy}{\}}}}1
}
\newcommand{\aG}{\mathsf{G}}
\newcommand{\gseqop}{{\colorOp ;}\,}
\newcommand{\gparop}{{\colorOp \ |\ }}
\newcommand{\gchoop}{{\colorOp \ +\ }}
\newcommand{\grecop}{{\colorOp *}}
\newcommandx{\nmerge}[2][1={i},2={},usedefault=@]{
  \ifempty{#2}{
    \ifempty{#1}{\mu}{\gname[-{#1}]}
  }{-{#2}}
}
\mkfun{\esbj}{sbj}{\ae}
\newcommandx{\gnode}[2][1=i,2=\gint,usedefault=@]{
  \ifcp{
    \ifempty{#1}{#2}{\gname[#1].\big({#2}\big)}
  }
  \else
  {#2}
  \fi
}
\newcommand{\gempty}{\mathbf{0}}
\newcommandx{\refgint}[3][1=A,2=\msg,3=B,usedefault=@]{
  \ptp[#1] {\colorOp \xdashrightarrow[{}]{\msg[{#2}]}}{
    
	 \docsvlist{#3}
  }
}
\newcommandx{\gint}[4][1=i,2=A,3=m,4=B,usedefault=@]{
  \ptp[#2] {\colorOp \xrightarrow{\scriptstyle \gname[#1]}} \ptp[#4] {\colorOp \colon} {\msg[{#3}]}
}
\newcommandx{\gout}[4][1=\gname,2=\ptp,3=m,4={\ptp[C]},usedefault=@]{
  \achan[{#2}][{#4}] {\colorOp {\colorOp{!}}} {\msg[{#3}]}
}
\newcommandx{\gin}[4][1=\gname,2=\ptp,3=m,4={\ptp[C]},usedefault=@]{
  \achan[{#2}][{#4}] {\colorOp {\colorOp{?}}} {\msg[{#3}]}
}
\newcommandx{\gseq}[3][1=\gname,2={\aG},3={\aG'},usedefault=@]{
  \def\ggraph{{#2} \gseqop {#3}}
  \ggraph
}
\newcommand{\ginfix}[4]{
  \def\ggraph{{#2} {#4} {#3}}
  \gnode[#1][\ggraph]
}
\newcommandx{\gpar}[3][1=i,2={\aG},3={\aG'},usedefault=@]{
  \ginfix{#1}{#2}{#3}{\gparop}
}
\newcommandx{\gcho}[3][1=i,2={\aG},3={\aG'},usedefault=@]{
  \ginfix{#1}{#2}{#3}{\gchoop}
}
\newcommandx{\gchov}[3][1=\gname,2={\aG},3={\aG'},usedefault=@]{
  \def\ggraph{\left(
  \begin{array}l
    \ifempty{#1}{{#2} \\ \gchoop \\ {#3}}{\!\!{#2} \\ \gchoop \\ {#3}}
  \end{array}\right)}
  \ifcp\gnode[{$#1$}][\ggraph] \else \ggraph \fi
}
\newcommandx{\grec}[2][1={},2={\aG},usedefault=@]{
  \def\ggraph{{#2}^\grecop}
  \ifempty{#1}{\ggraph}{\gname[{$#1$}][\ggraph]}
}	
\newcommand{\getcentroid}[2]{
    \coordinate (tmpgatecoord) at (0,0);
    \foreach \n [count=\i] in {#1}{
      \path (\n);
      \coordinate (tmpgatecoord) at ($(tmpgatecoord) + (\n)$);
      \coordinate (#2) at ($1/\i*(tmpgatecoord)$);
    }
}
\tikzset{
  hgsem/.style={
    draw,
    node distance=2cm and 1cm,
    transform shape,
    smooth,
    every node/.style = {font=\sffamily\bfseries}
  }
}
\tikzset{
  hgstyle/.style={
    src color={#1},
    tgt color={#1},
    centroid color={#1},
    centroid label={#1},
    centroid name={#1},
    centroid radius={#1},
    centroid ratio={#1},
    xoffset={#1},
    yoffset={#1},
    xsrcoffset={#1},
    ysrcoffset={#1},
    xtgtoffset={#1},
    ytgtoffset={#1},
    font={#1},
    centroid angle={#1},
    centroid tolerance={#1}
  },
  src color/.store in = \hgsrccol,
  tgt color/.store in = \hgtgtcol,
  centroid color/.store in =\hgfillcolor,
  centroid label/.store in =\hglabel,
  centroid name/.store in =\hgname,
  centroid radius/.store in = \hgradius,
  centroid ratio/.store in = \hgratio,
  xoffset/.store in =\hgxoffset,
  yoffset/.store in =\hgyoffset,
  xsrcoffset/.store in =\hgxsrcoffset,
  ysrcoffset/.store in =\hgysrcoffset,
  xtgtoffset/.store in =\hgxtgtoffset,
  ytgtoffset/.store in =\hgytgtoffset,
  centroid angle/.store in =\hgangle,
  centroid tolerance/.store in =\hgtolerance,
  src color = black,
  tgt color = black,
  centroid color = orange!40,
  centroid label={},
  centroid name={dummycentroid},
  centroid radius = .7pt,
  centroid ratio = .35,
  xoffset = 0,
  yoffset = 0,
  xsrcoffset = 0,
  ysrcoffset = 0,
  xtgtoffset = 0,
  ytgtoffset = 0,
  font=\sffamily\scriptsize,
  centroid angle=0,
  centroid tolerance=10pt
}
\newcommandx{\mkhg}[5][1={},4={},5={},usedefault=@]{
  \begingroup
  \tikzset{#1}
  \StrCount{#2,}{,}[\l] 
  \StrCount{#3,}{,}[\m] 
  \ifthenelse{\l = 1 \AND \m = 1}{
    \ifempty{#4}{
      \ifempty{#5}{
        \path[hgsem, ->, >=stealth', shorten >=1pt] (#2) -- (#3);
      }{
        \path[hgsem, ->, >=stealth', shorten >=1pt] (#2) #5 (#3);
      }
    }{
      \ifempty{#5}{
        \path[hgsem, ->, >=stealth', shorten >=1pt, #4] (#2) -- (#3);
      }{
        \path[hgsem, ->, >=stealth', shorten >=1pt, #4] (#2) #5 (#3);
      }
    }
  }{
    \coordinate (srcoffset) at (\hgxsrcoffset,\hgysrcoffset);
    \coordinate (tgtoffset) at (\hgxtgtoffset,\hgytgtoffset);
    \getcentroid{#2}{srccentroid};
    \getcentroid{#3}{tgtcentroid};
    \node[label={left:\hglabel}] (\hgname) at ($(srccentroid)!{1-\hgratio}!\hgangle:(tgtcentroid) + (\hgxoffset,\hgyoffset)$) {};
    \pgfgetlastxy \xc \yc;
    \pgfmathtruncatemacro{\xcontrol}{\xc};
    \pgfmathtruncatemacro{\ycontrol}{\yc};
    \foreach \n in {#2}{
      \path (\n);
      \pgfgetlastxy \xntmp \yntmp;
      \pgfmathtruncatemacro{\xn}{\xntmp};
      \pgfmathtruncatemacro{\yn}{\yntmp};
      \pgfmathsetmacro\xtmpdiff{abs(\xn - \xcontrol + \hgxsrcoffset)};
      \pgfmathsetmacro\ytmpdiff{abs(\yn - \ycontrol + \hgytgtoffset)};
      \ifdim \xtmpdiff pt > \hgtolerance
      \ifempty{#4}{
        \path[hgsem, \hgsrccol] (\n) .. controls ($(srccentroid.center) + (srcoffset)$) .. (\hgname.center);
      }{
        \path[hgsem, \hgsrccol] (\n) .. controls ($(srccentroid.center) + (srcoffset)$) .. (\hgname.center);
      }
      \else
      \ifempty{#4}{
        \path[hgsem, \hgsrccol] (\n) -- (\hgname.center);
      }{
        \path[hgsem, \hgsrccol, #4] (\n) -- (\hgname.center);
      }
      \fi
    }
    \foreach \n in {#3}{
      \path (\n);
      \pgfgetlastxy \xntmp \yntmp;
      \pgfmathtruncatemacro{\xn}{\xntmp};
      \pgfmathtruncatemacro{\yn}{\yntmp};
      \pgfmathsetmacro\xtmpdiff{abs(\xn - \xcontrol)};
      \pgfmathsetmacro\ytmpdiff{abs(\yn - \ycontrol)};
      \ifdim \xtmpdiff pt > \hgtolerance
      \ifempty{#4}{
        \path[hgsem, ->, >=stealth', shorten >=1pt, \hgtgtcol] (\hgname.center) .. controls (tgtcentroid.center) and ($(tgtcentroid.center) + (tgtoffset)$) .. (\n);
      }{
        \path[hgsem, ->, >=stealth', shorten >=1pt, \hgtgtcol,#4] (\hgname.center) .. controls (tgtcentroid.center) and ($(tgtcentroid.center) + (tgtoffset)$) .. (\n);
      }
      \else
      \ifempty{#4}{
        \path[hgsem, ->, >=stealth', shorten >=1pt, \hgtgtcol] (\hgname.center) --  (\n);
      }{
        \path[hgsem, ->, >=stealth', shorten >=1pt, \hgtgtcol] (\hgname.center) --  (\n);
      }
      \fi
    }
    \fill[\hgfillcolor] (\hgname) circle [radius=\hgradius];
  }
  \endgroup
}
\newcommandx{\hgordeq}[1][1={\aH},usedefault=@]{\sqsubseteq_{#1}}
\newcommandx{\gintsem}[4][4=.5]{
  \tikz[hgsem,scale=#4,every node/.style={font=\scriptsize}]{
    \node (out) {$\aout[{#1}][{#2}][][{#3}]$};
    \node[below = 20pt of out] (in) {$\ain[{#1}][{#2}][][{#3}]$};
    \mkhg{out}{in};
  }
}
\newcommandx{\gsem}[2][1={\aG},2={},usedefault=@]{\left\llbracket {#1} \right\rrbracket_{#2}}
\newcommandx{\rbot}{\text{undef}}
\newcommandx{\rtrs}[1][1={\aH},usedefault=@]{{#1}^{\star}}
\newcommandx{\gord}[1][1={\aG},usedefault=@]{\leq_{#1}}
\newcommandx{\gordeq}[1][1={\aG},usedefault=@]{\leq_{#1}}
\mkfun{\cause}{cs}{}
\mkfun{\effect}{ef}{}
\newcommandx{\aW}{w}
\newcommandx{\rlang}{\mathcal{L}}
\newcommand{\gfun}[1]{\ensuremath{\mathsf{\colorFun #1}}}
\mkfun{\eact}{\gfun{act}}{}
\mkfun{\enode}{\gfun{cp}}{}
\mkuop{\rmax}{\gfun{max}}{\aH}
\mkuop{\rmin}{\gfun{min}}{\aH}
\mkuop{\rMAX}{\gfun{lst}}{\aH}
\mkuop{\rMIN}{\gfun{fst}}{\aH}
\newcommandx{\rseq}[2][1=\aG,2={\aG'},usedefault=@]{\gfun{seq}({#1},{#2})}
\newcommandx{\rpar}[2][1=\aG,2={\aG'},usedefault=@]{\gfun{par}({#1},{#2})}
\newcommandx{\gproj}[2][1=\aG,2=\ptp]{{#1}\downarrow_{#2}}
\newcommandx{\cinit}[1][1={\aQzero},usedefault=@]{{#1}}
\newcommandx{\cfinal}[1][1={q_e},usedefault=@]{{#1}}
\newcommandx{\geproj}[4][1=\aG,2=\ptp,3=\cinit,4=\cfinal,usedefault=@]{
  {#1}\downarrow_{#2}^{{#3},{#4}}
}
\newcommand*{\StrikeThruDistance}{0.15cm}%
\tikzset{strike thru arrow/.style={
    decoration={markings, mark=at position 0.5 with {
        \draw [blue, thick,-] 
            ++ (-\StrikeThruDistance,-\StrikeThruDistance) 
            -- ( \StrikeThruDistance, \StrikeThruDistance);}
    },
    postaction={decorate},
}}
\newcommandx{\ich}[1][1={\aG},usedefault=@]{{#1}^{\oplus}}
\newcommandx{\ichedges}[2][1={\aG},2={\gname},usedefault=@]{{#1}^{\oplus}({#2})}
\newcommandx{\parts}[1]{2^{#1}}
\newcommandx{\actch}{c}
\newcommandx{\soundactch}[2][1={\aG},2={\actch},usedefault=@]{{#1} \,\circledR\, {#2}}
\newcommandx{\rOnActch}[2][1={\aG},2={\actch},usedefault=@]{{#1} \setminus {#2}}
\newcommandx{\rOnActchClean}[2][1={\aG},2={\actch},usedefault=@]{{#1} \circledR {#2}}
\newcommandx{\rAllEvents}[1][1={\aG},usedefault=@]{\mathit{dom}(#1)}
\newcommand{\AV}{\mathcal{V}}
\newcommand{\aH}{H}
\newcommandx{\hgvertex}[2][1=\al,2=\gname,usedefault=@]{{#1}_{\textcolor{red}{[{#2}]}}}
\newcommand{\aE}{{\colorE E}}
\renewcommand{\ae}[1][e]{{\colorE{#1}}}
\newcommand{\al}[1][\ell]{{\colorE{#1}}}
\newcommandx{\hyedge}[1]{\{#1\}}
\newcommandx{\rdiv}[2][1=\gcho,2=\ptp,usedefault=@]{
  \gfun{div}_{#2}(#1)
}
\newcommandx{\rrdiv}[5][1={\aG},2={\aG'},3={\AV},4={,\AV'},5=\ptp,usedefault=@]{
  \gfun{div}^{#3#4}_{#5}(#1,#2)
}
\newcommandx{\pdiv}[3][1={\apom_1},2={\apom_2},3={\apom},usedefault=@]{
  \gfun{div}_{#3}(#1,#2)
}
\newcommandx{\pfork}[3][1={\apom_1},2={\apom_2},3={\apom},usedefault=@]{
  \gfun{fork}_{#3}(#1,#2)
}
\tikzset{
  pomset/.style={
    node distance = .6cm and .6cm,
    scale = .7,
    transform shape,
    smooth
  }
}
\newcommandx{\mkint}[6][3=i,4=\p,5=\msg,6=\q,usedefault=@]{
  \node[bblock, #1] (#2) {$\gint[#3][#4][#5][#6]$};
}
\newcommand{\mkseq}[2]{\path[line] (#1) -- (#2);}
\newcommand{\mknseq}[1]{
  \StrCount{#1}{,}[\l] 
  \StrBefore{#1}{,}[\myhead]
  \StrBehind{#1}{,}[\mytail]
  \StrBefore{\mytail}{,}[\sndel]
  \ifnum \l > 1 {
    \mkseq{\myhead}{\sndel};
    \mknseq{\mytail}
  }
  \else{\ifnum \l > 0{
      \mkseq{\myhead}{\mytail};
    }
    \else{}
    \fi
  }
  \fi
}
\newcommandx{\mkgateblock}[6][6=yellow!10,usedefault=@]{
  \path(#2);
  \pgfgetlastxy{\xgate}{\ygate};
  \pgfmathtruncatemacro{\xgateround}{\xgate};
  \StrCount{#3,}{,}[\l] 
  \ifnum \l < 2 {\errmessage{#3 argument should be a comma-separated list of lenght >= 2}}
  \else{
    \foreach \n in {#3}{
      \path (\n);
      \pgfgetlastxy{\xnode}{\ynode};
      \pgfmathtruncatemacro{\xnround}{\xnode};
      \pgfmathsetmacro\tmpdiff{abs(\xnround - \xgateround)}
      \ifdim \tmpdiff pt > 1 pt \path[line] (#2) -| (\n);
      \else
        \path[line] (#2) -- (\n);
      \fi
    }
  }
  \fi
  \StrCount{#4,}{,}[\l] 
  \ifnum \l < 2 {\errmessage{#4 argument should be a comma-separated list of lenght >= 2}}
  \else{
    \foreach \n in {#4}{
      \path (\n);
      \pgfgetlastxy{\xnode}{\ynode};
      \pgfmathtruncatemacro{\xnround}{\xnode};
      \pgfmathsetmacro\tmpdiff{abs(\xnround - \xgateround)}
      \ifdim \tmpdiff pt > 1 pt \path[line] (\n) |- (#5);
      \else
        \path[line] (\n) -- (#5);
      \fi
    }
  }
  \fi
  \node[#1] at (#2) {};
  \node[#1] at (#5) {};
  {
    \begin{pgfonlayer}{background}
      \path[fill=#6,rounded corners]
      (current bounding box.south west) rectangle
      (current bounding box.north east);
    \end{pgfonlayer}
  }
}
\newcommandx{\mkbranchblock}[5][5=@]{
  \mkgateblock{ogate}{#1}{#2}{#3}{#4}[#5]
}
\newcommandx{\mkforkblock}[5][5=@]{
  \mkgateblock{agate}{#1}{#2}{#3}{#4}[#5]
}
\newcommandx{\mkgraph}[3][1=.5cm, usedefault=@]{
  \node[source,above = #1 of {#2}] (src#2) {};
  \node[sink,below  = #1 of {#3}] (sink#3) {};
  \path[line] (src#2) -- (#2);
  \path[line] (#3) -- (sink#3);
}
\newcommandx{\mkloop}[5][1=.5, 2=1.5, 5=\aguard, usedefault=@]{
  %
  \node[lgate,above = #1 of {#3}] (entry#3) {};
  \pgfgetlastxy \xentry \yentry;
  \pgfmathtruncatemacro{\xentryrounded}{\xentry};
  \node[below = #1 of {#4}, label = {above right:{$#5$}},yshift=-1em] (dummy) {};
  \node[lgate,below  = #1 of {#4}] (exit#4) {};
  \pgfgetlastxy \xexit \yexit;
  \pgfmathtruncatemacro{\xexitrounded}{\xexit};
  \path[line] (entry#3) -- (#3);
  \path[line] (#4) -- (exit#4);
  \pgfmathsetmacro\tmpdiff{abs(\xentryrounded - \xexitrounded)}
  \path[line, color=teal] (exit#4) -| ($(exit#4)+(\tmpdiff,0)+(#2,0)$) |- (entry#3);
}
\newcommandx{\mkfork}[4][2=gatenode,3=i,4=.6,usedefault=@]{
  \mkgatebegin{#1}[{\gname[{#3}]}][agate][#4]{#2}
}
\newcommandx{\mkbranch}[4][2=gatenode,3=i,4=.6,usedefault=@]{
  \mkgatebegin{#1}[{\gname[{#3}]}][ogate][#4]{#2}
}
\newcommandx{\mkgatebegin}[5][2={},3=ogate,4=.5,usedefault=@]{
  %
  \coordinate (gatecord) at (0,0);
  \coordinate (xmax) at (0,0);
  \coordinate (xmin) at (0,0);
  \pgfgetlastxy \xmin \xmax;
  \foreach \n [count=\i] in {#1}{
    \pgfgetlastxy \xc \yc;
    \path (\n);
    \pgfgetlastxy \xn \yn;
    \ifnum \i = 1
      \coordinate (xmin) at (\xn,0);
      \coordinate (xmax) at (\xn,0);
      \coordinate (max) at (0,\yn);
    \else
      \ifdim \xn < \xmin
        \coordinate (xmin) at (\xn,0);
      \fi
      \ifdim \xn > \xmax
        \coordinate (xmax) at (\xn,0);
      \fi
      \ifdim \yn < \yc
        \coordinate (max) at (0,\yc);
      \else
        \coordinate (max) at (0,\yn);
      \fi
    \fi
  }
  \coordinate (gatecord) at ($(xmin)!.5!(xmax) + (max) + (0,#4) + (max)$);
  \node[#3,label={below:$#2$}] (#5) at (gatecord) {};
  \pgfgetlastxy{\xgate}{\ygate};
  \pgfmathtruncatemacro{\xgateround}{\xgate};
  \StrCount{#1,}{,}[\l] 
  \ifnum \l < 2 {\errmessage{#1 argument should be a comma-separated list of lenght >= 2}}
  \else{
    \foreach \n in {#1}{
      \path (\n);
      \pgfgetlastxy{\xnode}{\ynode};
      \pgfmathtruncatemacro{\xnround}{\xnode};
      \pgfmathsetmacro\tmpdiff{abs(\xnround - \xgateround)}
      \ifdim \tmpdiff pt > 1 pt \path[line] (#5) -| (\n);
      \else
        \path[line] (#5) -- (\n);
      \fi
    }
  }
  \fi
}
\newcommandx{\mkgatebeginold}[5][2={},3=ogate,4=.5,usedefault=@]{
  %
  \coordinate (gatecord) at (0,0);
  \foreach \n [count=\i] in {#1}{
    \pgfgetlastxy \xc \yc;
    \path (\n);
    \pgfgetlastxy \xn \yn;
    \coordinate (gatecord) at ($(gatecord) + (\xn,0)$);
    \coordinate (gatecord) at ($1/\i*(gatecord)$);
    \ifdim \yn < \yc
    \node (max) at (0,\yc) {};
    \else
    \node (max) at (0,\yn) {};
    \fi
  }
  \coordinate (gatecord) at ($(gatecord) + (0,#4) + (max)$);
  \node[#3,label={below:$#2$}] (#5) at (gatecord) {};
  \pgfgetlastxy{\xgate}{\ygate};
  \pgfmathtruncatemacro{\xgateround}{\xgate};
  \StrCount{#1,}{,}[\l] 
  \ifnum \l < 2 {\errmessage{#1 argument should be a comma-separated list of lenght >= 2}}
  \else{
    \foreach \n in {#1}{
      \path (\n);
      \pgfgetlastxy{\xnode}{\ynode};
      \pgfmathtruncatemacro{\xnround}{\xnode};
      \pgfmathsetmacro\tmpdiff{abs(\xnround - \xgateround)}
      \ifdim \tmpdiff pt > 1 pt \path[line] (#5) -| (\n);
      \else
        \path[line] (#5) -- (\n);
      \fi
    }
  }
  \fi
}
\newcommandx{\mkmerge}[4][2=gatenode,3=i,4=.5,usedefault=@]{
  \mkgateend{#1}[{\ifempty{#3}{}{\nmerge[#3]}}][ogate][#4]{#2}
}
\newcommandx{\mkjoin}[4][2=gatenode,3=i,4=.5,usedefault=@]{\mkgateend{#1}[{\ifempty{#3}{}{\nmerge[#3]}}][agate][#4]{#2}}
\newcommandx{\mkgateend}[5][2={},3=ogate,4=.5,usedefault=@]{
  %
  \coordinate (gatecord) at (0,0);
  \coordinate (xmax) at (0,0);
  \coordinate (xmin) at (0,0);
  \pgfgetlastxy \xmin \xmax;
  \foreach \n [count=\i] in {#1}{
    \pgfgetlastxy \xc \yc;
    \path (\n);
    \pgfgetlastxy \xn \yn;
    \ifnum \i = 1
      \coordinate (xmin) at (\xn,0);
      \coordinate (xmax) at (\xn,0);
      \coordinate (ymin) at (0,\yn);
    \else
      \ifdim \xn < \xmin
        \coordinate (xmin) at (\xn,0);
      \fi
      \ifdim \xn > \xmax
        \coordinate (xmax) at (\xn,0);
      \fi
      \ifdim \yn > \yc
        \coordinate (ymin) at (0,\yc);
      \else
        \coordinate (ymin) at (0,\yn);
      \fi
    \fi
  }
  \coordinate (gatecord) at ($(xmin)!.5!(xmax) + (ymin)$);
  \node[#3,label={above:$#2$}] (#5) at ($(gatecord) - (0,{#4})$) {};
  \pgfgetlastxy{\xgate}{\ygate};
  \pgfmathtruncatemacro{\xgateround}{\xgate};
  \StrCount{#1,}{,}[\l] 
  \ifnum \l < 2 {\errmessage{#1 argument should be a comma-separated list of lenght >= 2}}
  \else{
    \foreach \n in {#1}{
      \path (\n);
      \pgfgetlastxy{\xnode}{\ynode};
      \pgfmathtruncatemacro{\xnround}{\xnode};
      \pgfmathsetmacro\tmpdiff{abs(\xnround - \xgateround)}
      \ifdim \tmpdiff pt > 1 pt \path[line] (\n) |- (#5);
      \else
        \path[line] (\n) -- (#5);
      \fi
    }
  }
  \fi
}
\newcommandx{\mkgateendold}[5][2={},3=ogate,4=.5,usedefault=@]{
  %
  \coordinate (gatecord) at (0,0);
  \coordinate (xmax) at (0,0);
  \coordinate (xmin) at (0,0);
  \pgfgetlastxy \xmin \xmax;
  \foreach \n [count=\i] in {#1}{
    \pgfgetlastxy \xc \yc;
    \path (\n);
    \pgfgetlastxy \xn \yn;
    \ifdim \xn < \xmin
    \coordinate (xmin) at (\xn,0);
    \fi
    \ifdim \xn > \xmax
    \coordinate (xmax) at (\xn,0);
    \fi
    \ifdim \yn > \yc
    \coordinate (ymin) at (0,\yc);
    \else
    \coordinate (ymin) at (0,\yn);
    \fi
    \coordinate (gatecord) at ($(xmin)!.5!(xmax) + (ymin)$);
  }
  \node[#3,label={above:$#2$}] (#5) at ($(gatecord) - (0,{#4})$) {};
  \pgfgetlastxy{\xgate}{\ygate};
  \pgfmathtruncatemacro{\xgateround}{\xgate};
  \StrCount{#1,}{,}[\l] 
  \ifnum \l < 2 {\errmessage{#1 argument should be a comma-separated list of lenght >= 2}}
  \else{
    \foreach \n in {#1}{
      \path (\n);
      \pgfgetlastxy{\xnode}{\ynode};
      \pgfmathtruncatemacro{\xnround}{\xnode};
      \pgfmathsetmacro\tmpdiff{abs(\xnround - \xgateround)}
      \ifdim \tmpdiff pt > 1 pt \path[line] (\n) |- (#5);
      \else
        \path[line] (\n) -- (#5);
      \fi
    }
  }
  \fi
}
\newcommand{\gatedistancein}{3pt}
\newcommand{\gatedistanceinand}{2pt}
\tikzset{
  src/.style={draw,circle,fill=white,
    minimum size=2mm,
    inner sep=0pt
  },
  sink/.style={draw,circle,double,fill=white,
    minimum size=1.5mm,
    inner sep=0pt
  },
  node/.style={draw,circle,fill=black,
    minimum size=2mm,
    inner sep=0pt
  },
  source/.style={draw,circle,fill=white,
    minimum size=3mm,
    inner sep=0pt
  },
  sink/.style={draw,circle,double,fill=white,
    minimum size=3mm,
    inner sep=0pt
  },
  block/.style = {rectangle, draw=gray, align=center, fill=orange!25, rounded corners=0.1cm,
    minimum size=5mm, inner sep=2pt},
  prenode/.style = {minimum size=9pt,inner sep=2pt, font=\Large},
  bblock/.style = {rectangle, draw=blue!50, opacity=.7, line width=.5pt, align=center, fill=white, rounded corners=0.1cm,
    minimum size=4mm, inner sep=1pt},
  prenode/.style = {minimum size=9pt,inner sep=2pt, font=\Large},
  agate/.style={draw, rectangle,
    minimum size=3mm,
    inner sep=0pt,
    fill=orange!25,
    label={[red]center:$\mid$}
  },
  ogate/.style = {
    diamond, draw, fill=orange!25,
    minimum size=4mm,
    inner sep=0pt,
    label={[red]center:$+$}
  },
  lgate/.style = {
    diamond, draw, fill=orange!25,
    minimum size=4mm,
    inner sep=0pt,
    label={[red]center:$\circlearrowleft$}
    },
  altogate/.style = {
    diamond, draw,
    minimum size=4mm,
    inner sep=0pt,
    postaction={path picture={%
        \draw
        ([yshift=\gatedistancein]path picture bounding box.south) -- ([yshift=-\gatedistancein]path picture bounding box.north)
        ([xshift=-\gatedistancein]path picture bounding box.east) -- ([xshift=\gatedistancein]path picture bounding box.west)
        ;}}},
  altgate/.style={draw, rectangle,
    minimum size=3mm,
    inner sep=0pt,
    postaction={path picture={%
        \draw
        ([yshift=\gatedistanceinand]path picture bounding box.south) --
        ([yshift=-\gatedistanceinand]path picture bounding box.north) ;}}},
  anygate/.style = {circle, draw, fill=white,
    minimum size=4mm,
    inner sep=0pt,
    postaction={path picture={%
        \draw[black]
        ([xshift=-\gatedistancein,yshift=\gatedistancein]path picture bounding box.south east) --
        ([xshift=\gatedistancein,yshift=-\gatedistancein]path picture bounding box.north west)
        ([xshift=-\gatedistancein,yshift=-\gatedistancein]path picture bounding box.north east) --
        ([xshift=\gatedistancein,yshift=\gatedistancein]path picture bounding box.south west)
        ;}}
  },
  smallglobal/.style={
        node distance=1cm and 0.8cm, semithick, scale=0.8, every node/.style={transform shape}
  },
  elli/.style = {draw,densely dotted,-},
  %
  line/.style = {draw,->, rounded corners=0.07cm,>=latex},
  nline/.style = {draw,semithick, ->},
  pline/.style = {draw,->,>=latex},
  node distance=1cm and 0.7cm,
  baseline=(current  bounding  box.center),
  local/.style={rectangle, draw, fill=\fillcolor, drop shadow,
    text centered, rounded corners, minimum height=5em
  },
  bigar/.style={
    draw,very thick, ->
  },
  process/.style={rectangle, draw=gray, fill=\fillcolor, drop shadow,
    text centered, minimum height=5em,text=gray
  },
  choreo/.style={rectangle, draw, fill=\fillcolor, drop shadow,
    text centered, rounded corners, minimum height=5em
  },
  mycfsm/.style={
        font=\footnotesize,
        initial where=above,
        ->,>=stealth,auto, node distance=1cm and 1cm,
        scale=1, every node/.style={transform shape},
        every state/.style=inner sep=2pt,
        baseline=(current  bounding  box.center),
        initial text={}
  },
  machinecloud/.style={
    cloud, cloud puffs=10, cloud ignores aspect, minimum height=.1cm, minimum width=2cm, draw
  },
  fitting node/.style={
    inner sep=0pt,
    fill=none,
    draw=none,
    reset transform,
    fit={(\pgf@pathminx,\pgf@pathminy) (\pgf@pathmaxx,\pgf@pathmaxy)}
  },
  mypetri/.style={
    font=\footnotesize,
    baseline=(current  bounding  box.center)
  },
  silentrans/.style = {rectangle, draw=black, align=center, fill=black,
    minimum height=1pt,
    minimum width=15pt,
    inner sep=1.5pt
  },
  reset transform/.code={\pgftransformreset},
  tmtape/.style={draw,minimum size=1.2cm}
}
\newcommand{\gunlessop}{\mbox{\colorOp\tiny\tt unless}}
\newcommandx{\gtry}[5][1=\gname,2={\aG_1 \gchoop \cdots \gchoop \aG_n},3=\gin,4=\gout,5={j},usedefault=@]{
  \def\foo{\gtryop\ {#2} \ \gcatchop\ {#3} {\colorOp \Rightarrow} {#4} {\colorOp \bullet} {\gname[{#5}]}}
  \gnode[{#1}][{\ifempty{#1} {\foo } {(\foo)}}]
}
\newcommandx{\gtrycatch}[4][1=\gname,2={\aG},3=\gin,4={\aG'},usedefault=@]{
  \def\foo{\gtryop\ {#2} \ \gcatchop\ {#3} \gdoop\ {#4}}
  \gnode[{#1}][{\ifempty{#1} {\foo} {(\foo)}}]
}
\newcommandx{\agG}[2][1={\aG},2=\aguard]{{#1} \ifempty{#2}{}{\ \gunlessop\ {#2}}}
\newcommandx{\grcho}[5][1=\gname,2={\agG},3={\agG[\aG'][\aguard']},4={\cdots},5=A,usedefault=@]{
  \def\foo{{#2} {\ \ifempty{#4}{\gchoop}{\gchoop \ifempty{#4}{}{\ {#4}\  \gchoop}}\ } {#3}}
  \ifempty{#1}{\ifempty{#5}{\foo}{\gselop\ \cpt[{#1}][{\ptp[#5]}]\big\{ \foo \big\}}}{\gselop\ \cpt[{#1}][{\ptp[#5]}]\big\{ \foo \big\}}
}
\newcommandx{\ggprefix}[3][1=\ptp,2={\aR},3={\aR'},usedefault=@]{f_{#1}} 
\newcommand{\aconfigfn}{\chi}
\newcommand{\aconfig}{\ell}
\newcommand{\lstates}{\statemap}
\newcommandx{\sysconfig}[3][1=\lstates,2=\aconfigfn,3={},usedefault=@]{
  \conf{ {#1},{#2} \ifempty{#3}{}{, #3} }
}
\newcommand{\sysctxfn}[1][]{\gamma_{#1}}
\newcommandx{\sysctx}[2][1=\aQ,2={},usedefault=@]{({#1},\sysctxfn[{#2}])}
\newcommandx{\alog}[4][1=\msg,2=q,3=\gname,4=t,usedefault=@]{({#1},{#2},{#3},{#4})}
\newcommand{\aCM}{M}\newcommand{\aM}{\aCM}
\newcommand{\aQ}{Q}
\newcommandx{\aQzero}[1][1=,usedefault=@]{
  {\ifempty{#1}{q_0}{q_{0#1}}}
}
\newcommand{\badbranches}[1][]{\beta\ifempty{#1}{}{({#1})}}
\newcommand{\aTrs}{\tset}
\newcommandx{\guardedaction}[2][1=\al,2=\aguard,usedefault=@]{
  {#1} \ifempty{#2}{}{/} {#2}
}
\newcommandx{\atrM}[4][1=q,2=\al,3={\hat q,\hat \al, \aguard},4=q',usedefault=@]{
  {#1} \xrightarrow[{#3}]{\guardedaction[{#2}][]} {{#4}}
}
\newcommandx{\atrS}[5][
  1={\sysconfig[@][@][\badbranches]},
  2=\al,
  3=\aguard,
  4={\sysconfig[\lstates'][\aconfigfn'][\badbranches]},
  5=\sysctx,usedefault=@
]{
  {#1} \xRightarrow{\qquad} {{#4}}
}
\newcommandx{\arevtrS}[2][
  1={\sysconfig[@][@][\badbranches]},
  2={\sysconfig[\lstates'][\aconfigfn'][\badbranches']},
  usedefault=@
]{
  {#1} \rightsquigarrow {#2}
}
\newcommand{\aCS}{S}
\newcommand{\abuffer}{\vec b}
\newcommandx{\enables}[2][1=\aconfigfn,2=\aguard,usedefault=@]{{#1} \vdash {#2}}
\newcommandx{\gprojfn}[5][1=\aG,2=\ptp,3=\cinit,4=\cfinal,5={},usedefault=@]{
  \mathbf{proj}_{#2}({#1},{#3},{#4}\ifempty{#5}{}{,{#5}})
}
\newcommandx{\rbp}[3][1=\aG,2=\aconfigfn,3=\achan,usedefault=@]{\mathtt{RBP}_{{#1},{#2}}\ifempty{#3}{}{({#3})}}
\newcommand{\apseudoCFSM}{\mathtt{M}}
\newcommandx{\pseudoseq}[2][1=\apseudoCFSM,2=\apseudoCFSM',usedefault=@]{{#1}  ; {#2}}
\newcommandx{\pseudoCFSM}[4][1=\aQ,2=\aQzero,3=\cfinal,4=\aTrs,usedefault=@]{(#1 \ ; #2 \ ; #3 \ ; #4)}
\newcommandx{\markt}[3][1=\hat{\al},2=\hat{q},3=\aguard,usedefault=@]{\%\big({#1} , {#2}, {#3}\big)}
\newcommandx{\borderfn}[2][1=\aconfig,2=\aloop,usedefault=@]{
  \mathsf{border}_{{#2}}\ifempty{#1}{}{({#1})}
}
\tikzset{
  mycallout/.style={
	 fill=gray!30, opacity=.5, overlay, align=center,
	 cloud callout, cloud puffs=10, aspect=1.9, cloud ignores aspect, cloud puff arc=100
  }
}
\newcommandx{\ggvisually}[8][1=5pt,2=15pt,3=5pt,4=5pt,5=1.0cm,6=\scriptsize,7={},8={},usedefault=@]{
  \def\dist{\hspace{#5}}
  $\begin{array}{c@{\dist}c@{\dist}c@{\dist}c@{\dist}c@{\dist}c}
	  \begin{tikzpicture}[node distance=0.9cm and 0.4cm, every node/.style={scale=.7,transform shape}]
		 \node[source] (srcint) {};
		 \node[sink,below=of srcint] (sinkint) {};
		 \node[mycallout, above = .3cm of srcint, xshift=1cm, callout absolute pointer={(srcint.east)}] {source node};
		 \node[mycallout, below = .3cm of sinkint, xshift=1cm, callout absolute pointer={(sinkint.west)}] {sink node};
		 \path[line] (srcint) -- (sinkint);
	  \end{tikzpicture}
	  &
		 \begin{tikzpicture}[node distance=0.9cm and 0.4cm, every node/.style={scale=.7,transform shape}]
			\mkint{}{int}[]
			\mkgraph{int}{int};
		 \end{tikzpicture}
	  &
		 \begin{tikzpicture}[node distance=.9cm and 0.4cm, every node/.style={scale=.7,transform shape}]
			\node[bblock] at (0,0) (g) {$\aG$};
			\node[node, below=of g] (s1) {};
			\node[bblock, below=of s1] (gp) {$\aG'$};
			\path[line,dotted] (g) -- (s1);
			\path[line,dotted] (s1) -- (gp);
		 \end{tikzpicture}
	  &
		 \begin{tikzpicture}[node distance=.4cm and 0.4cm, every node/.style={scale=.7,transform shape}]
			\node[bblock] at (-.7,0) (g) {$\aG$};
			\node[bblock] at (.7,0)  (gp) {$\aG'$};
			\node[node, above=of g] (f) {};
			\node[node, below=of g] (j) {};
			\node[node, above=of gp] (fp) {};
			\node[node, below=of gp] (jp) {};
			\path[line,dotted] (f) -- (g);
			\path[line,dotted] (g) -- (j);
			\path[line,dotted] (fp) -- (gp);
			\path[line,dotted] (gp) -- (jp);
			\mkfork{f,fp}[fork][][#1];
			\mkjoin{j,jp}[join][][#2];
			\mkgraph{fork}{join};
			\node[mycallout, above = .3cm of fork, xshift=-1cm, callout absolute pointer={(fork.west)}] {fork gate};
			\node[mycallout, above = -.9cm of join, xshift=-1cm, callout absolute pointer={(join.west)}] {join gate};
		 \end{tikzpicture}
	  &
		 \begin{tikzpicture}[node distance=.4cm and 0.4cm, every node/.style={scale=.7,transform shape}]
			\node[bblock] at (-.7,0) (g) {$\aG$};
			\node[bblock] at (.7,0)  (gp) {$\aG'$};
			\node[node, above=of g] (f) {};
			\node[node, below=of g] (j) {};
			\node[node, above=of gp] (fp) {};
			\node[node, below=of gp] (jp) {};
			\path[line,dotted] (f) -- (g);
			\path[line,dotted] (g) -- (j);
			\path[line,dotted] (fp) -- (gp);
			\path[line,dotted] (gp) -- (jp);
			\mkbranch{f,fp}[fork][][#3];
			\mkmerge{j,jp}[join][][#4];
         \mkgraph{fork}{join};
         \node[mycallout, above = .3cm of fork, xshift=-1cm, callout absolute pointer={(fork.west)}] {branch gate};
         \node[mycallout, above = -.9cm of join, xshift=-1cm, callout absolute pointer={(join.west)}] {merge gate};
       \end{tikzpicture}
     \ifempty{#7}{}{
     &
		 \begin{tikzpicture}[node distance=0.4cm and 0.4cm, every node/.style={scale=.7,transform shape}]
        \node[bblock] (g) {$\aG$};
        \node[node, above=.5cm of g] (f) {};
        \node[node, below=.5cm of g] (j) {};
        \path[line,dotted] (f) -- (g);
        \path[line,dotted] (g) -- (j);
        \mkloop[.4][1]{f}{j};
        \mkgraph[.3cm]{entryf}{exitj};
        \node[mycallout, above = .2cm of entryf, xshift=1.3cm, callout absolute pointer={(entryf.east)}] {loop entry};
        \node[mycallout, above = -.7cm of exitj, xshift=1.3cm, callout absolute pointer={(exitj.west)}] {loop exit};
      \end{tikzpicture}
     }
     \ifempty{#8}{}{
	  \\
     \text{#6 empty}
     &
     \text{#6 interaction}
     &
     \text{#6 sequential}
     &
     \text{#6 parallel}
     &
     \text{#6 branch}
		 &\text{#6 iteration}
   }
   \end{array}$
}
\newcommandx{\newggvisually}[5][1=5pt,2=15pt,3=\scriptsize,4={},5={},usedefault=@]{
  \def\w{1cm}
  \begin{minipage}[c]{\w}
	 \ifempty{#5}{}{\text{#3 empty}\\[#1]}
	 \begin{tikzpicture}[node distance=0.9cm and 0.4cm, every node/.style={scale=.7,transform shape}]
		\node[source] (srcint) {};
		\node[sink,below=of srcint] (sinkint) {};
		\node[mycallout, above = .3cm of srcint, xshift=1cm, callout absolute pointer={(srcint.east)}] {source node};
		\node[mycallout, below = .3cm of sinkint, xshift=1cm, callout absolute pointer={(sinkint.west)}] {sink node};
		\path[line] (srcint) -- (sinkint);
	 \end{tikzpicture}
  \end{minipage}
  \hfill
	  \begin{minipage}[c]{\w}
     \ifempty{#5}{}{\text{#3 interaction}\\[#1]}
		 \begin{tikzpicture}[node distance=0.9cm and 0.4cm, every node/.style={scale=.7,transform shape}]
			\mkint{}{int}[]
			\mkgraph{int}{int};
		 \end{tikzpicture}
	  \end{minipage}
	 \hfill
	  \begin{minipage}[c]{.1cm}
     \ifempty{#5}{}{\text{#3 sequential}\\[#1]}
		 \begin{tikzpicture}[node distance=.9cm and 0.4cm, every node/.style={scale=.7,transform shape}]
			\node[bblock] at (0,0) (g) {$\aG$};
			\node[node, below=of g] (s1) {};
			\node[bblock, below=of s1] (gp) {$\aG'$};
			\path[line,dotted] (g) -- (s1);
			\path[line,dotted] (s1) -- (gp);
		 \end{tikzpicture}
	  \end{minipage}
	  \hfill
	  \begin{minipage}[c]{\w}
     \ifempty{#5}{}{\text{#3 parallel}\\[#1]}
		 \begin{tikzpicture}[node distance=.4cm and 0.4cm, every node/.style={scale=.7,transform shape}]
			\node[bblock] at (-.7,0) (g) {$\aG$};
			\node[bblock] at (.7,0)  (gp) {$\aG'$};
			\node[node, above=of g] (f) {};
			\node[node, below=of g] (j) {};
			\node[node, above=of gp] (fp) {};
			\node[node, below=of gp] (jp) {};
			\path[line,dotted] (f) -- (g);
			\path[line,dotted] (g) -- (j);
			\path[line,dotted] (fp) -- (gp);
			\path[line,dotted] (gp) -- (jp);
			\mkfork{f,fp}[fork][][#1];
			\mkjoin{j,jp}[join][][#2];
			\mkgraph{fork}{join};
			\node[mycallout, above = .2cm of fork, xshift=-1cm, callout absolute pointer={(fork.west)}] {#3 fork gate};
			\node[mycallout, below = .2cm of join, xshift=-1cm, callout absolute pointer={(join.west)}] {#3 join gate};
		 \end{tikzpicture}
	  \end{minipage}
	  \hfill
	  \begin{minipage}[c]{\w}
     \ifempty{#5}{}{\text{#3 branch}\\[#1]}
		 \begin{tikzpicture}[node distance=.4cm and 0.4cm, every node/.style={scale=.7,transform shape}]
			\node[bblock] at (-.7,0) (g) {$\aG$};
			\node[bblock] at (.7,0)  (gp) {$\aG'$};
			\node[node, above=of g] (f) {};
			\node[node, below=of g] (j) {};
			\node[node, above=of gp] (fp) {};
			\node[node, below=of gp] (jp) {};
			\path[line,dotted] (f) -- (g);
			\path[line,dotted] (g) -- (j);
			\path[line,dotted] (fp) -- (gp);
			\path[line,dotted] (gp) -- (jp);
			\mkbranch{f,fp}[fork][][#1];
			\mkmerge{j,jp}[join][][#2];
         \mkgraph{fork}{join};
         \node[mycallout, above = .1cm of fork, xshift=-1cm, callout absolute pointer={(fork.west)}] {branch gate};
         \node[mycallout, below = .1cm of join, xshift=-1cm, callout absolute pointer={(join.west)}] {merge gate};
       \end{tikzpicture}
	  \end{minipage}
     \ifempty{#4}{}{
     \hfill
	  \begin{minipage}[c]{\w}
     \ifempty{#5}{}{\text{#3 iteration}\\[#1]}
		 \begin{tikzpicture}[node distance=0.4cm and 0.4cm, every node/.style={scale=.7,transform shape}]
        \node[bblock] (g) {$\aG$};
        \node[node, above=.5cm of g] (f) {};
        \node[node, below=.5cm of g] (j) {};
        \path[line,dotted] (f) -- (g);
        \path[line,dotted] (g) -- (j);
        \mkloop[.4][1]{f}{j};
        \mkgraph[.3cm]{entryf}{exitj};
        \node[mycallout, above = .2cm of entryf, xshift=1.3cm, callout absolute pointer={(entryf.east)}] {loop entry};
        \node[mycallout, above = -.7cm of exitj, xshift=1.3cm, callout absolute pointer={(exitj.east)}] {loop exit};
      \end{tikzpicture}
	  \end{minipage}
     }
}
  \newcommandx{\wwwcquote}[1][1=quo:w3c,usedefault=@]{
	 \ifempty{#1}{}{\begin{quote}\label{#1}}
		\lq\lq Using the Web Services Choreography specification, a
		\textcolor{orange}{contract} containing a global definition of the
		common \textcolor{orange}{ordering} conditions and constraints
		under which \textcolor{orange}{messages} are exchanged, is
		produced that describes, from a \textcolor{orange}{global
		  viewpoint} [...]  observable behaviour of all the parties
		involved.
		\textcolor{OliveGreen}{Each party} can then use the global definition to
		\textcolor{OliveGreen}{build and test solutions that conform to it}.
		The global specification is in turn \textcolor{OliveGreen}{realised by combination of} the
		resulting \textcolor{OliveGreen}{local systems} [...]\rq\rq
		\ifempty{#1}{}{\end{quote}}
  }
\newcommand{\subject}{\esbj}
\newcommand{\varset}{\mathtt{X}}
\newcommandx{\constset}[1][1=C,usedefault=@]{\mathsf{#1}}
\newcommandx{\termalg}[2][1=\Sigma,2=\varset,usedefault=@]{\mathsf{Term}_{#1,#2}}
\newcommandx{\formulas}[2][1=\Sigma,2=\varset,usedefault=@]{\mathsf{Form}_{#1,#2}}
\newcommandx{\modelsclass}[2][1=\Sigma,2=\varset,usedefault=@]{\mathsf{Mod}_{#1,#2}}
\newcommand{\QL}{\mathcal{QL}}
\newcommand{\DLTL}{\mathit{DLTL}}
\newcommand{\gqosprop}{\Phi}
\newcommand{\genericAttr}{\mathsf{a}}
\newcommand{\qosspecs}{\mathcal{C}}
\newcommand{\gpop}[1]{\aG_\mathrm{#1}}
\mkfun{\aggfn}{agg}{}
\mkfun{\qos}{qos}{}
\mkfun{\splitfn}{split}{}
\def \({\left (}
\def \){\right )}
\def \[{\left [}
\def \]{\right ]}
\def \[[{\left \llbracket}
\def \]]{\right \rrbracket}
\def \<{\left\langle}
\def \>{\right\rangle}
\newcommandx{\psiinst}[3][1=\psi,2=a,3=q,usedefault=@]{{#1}_{\ptp[#2]}^{{#3}}}
\let\vec\mathfrak
\title{
  A Dynamic Temporal Logic for Quality of Service in Choreographic Models
  \thanks{\tnxbehapi \tnxastra Carlos G.\ Lopez Pombo's research is
	 partly supported by Universidad de Buenos Aires by grant UBACyT
	 20020170100544BA and Agencia Nacional de Promoci\'on de la
	 Investigaci\'on, el Desarrollo Tecnol\'ogico y la Innovación
	 Científica through grant PICT-2019-2019-01793.
  	 \\[.5em]
	 The authors thank the anonymous reviewers for their constructive
	 comments.
	 \\[.5em]\
  }
}
\author{Carlos
  G. Lopez Pombo\thanks{On leave from CONICET-Universidad de Buenos Aires. Instituto de Ciencias de la Computaci\'{o}n, Buenos Aires, Argentina and Universidad de Buenos Aires, Facultad de Ciencias Exactas y Naturales, Departamento de Computaci\'{o}n, Buenos Aires, Argentina.
  }\inst{1}, Agust\'in E. Martinez Su\~n\'e\inst{2,3}, Emilio
  Tuosto\inst{4}
}
\institute{%
  Centro Interdisciplinario de Telecomunicaciones, Electrónica, Computación y Ciencia Aplicada - CITECCA, Universidad Nacional de Río Negro - Sede Andina, San Carlos de Bariloche, Argentina \\ \email{cglopezpombo@unrn.edu.ar} \and 
  CONICET-Universidad de Buenos Aires, Instituto de Ciencias de la Computaci\'{o}n, Buenos Aires, Argentina \and 
  Universidad de Buenos Aires, Facultad de Ciencias Exactas y Naturales, Departamento de Computaci\'{o}n, Buenos Aires, Argentina \\ \email{aemartinez@dc.uba.ar} \and
  Gran Sasso Science Institute, L'aquila, Italy \\   \email{emilio.tuosto@gssi.it}} \authorrunning{%
  Carlos G. Lopez Pombo, Agust\'in E. Martinez Su\~n\'e, Emilio Tuosto
}
\begin{document}

\maketitle

\begin{abstract}

  %
  We propose a framework for expressing and analyzing the \emph{Quality of 
  Service} (QoS) of message-passing systems using a choreographic 
  model that consists of \emph{g-choreographies} and \emph{Communicating Finite 
  State machines} (\mbox{CFSMs}).
  %
  The following are our three main contributions: 
  \begin{inparaenum}[(I)]
  \item an extension of CFSMs with non-functional contracts to specify
	 quantitative constraints of local computations,
  \item a dynamic temporal logic capable of expressing QoS, properties
	 of systems relative to the g-choreography that specifies the
	 communication protocol,
\item the semi-decidability of our logic which enables a bounded
  model-checking approach to verify QoS property of communicating systems.
\end{inparaenum}
\end{abstract}

\section{Introduction}\label{sec:intro}
%
Over the past two decades, software has steadily changed from monolithic 
applications to distributed cooperating components.
Choreographic approaches are gaining momentum in industry
(e.g.~\cite{bpmn,bon18,fmmt20,DBLP:journals/software/AutiliIT15}) which,
increasingly, conceives applications as components
interacting over existing communication infrastructures.
Among other models, choreographies stand out for a neat separation of
concerns: choreographic models abstract away local computations from communications among participants.
In fact, since their introduction~\cite{w3c:wsdl20}, choreographies
advocate for a separation between a \emph{global view} and a \emph{local view} of communication.
The former is a high-level description of (distributed) interactions.
The latter view is a description of each component in isolation.
This is the distinctive feature of choreographies that we exploit here
to reason about quantitative properties of applications.
The basic idea is to specify the values of quality attributes of
local states of components and then \emph{aggregate} those attributes 
along runs involving communications.
A simple example can illustrate this.
Suppose that a component \p\ sends another component \q\ a message
$\msg$ and we want to consider two quality attributes: monetary cost
($c$) and memory consumption ($\mathit{mem}$).
This behaviour can be abstracted away with the finite-state machines below
\begin{equation}\label{eq:idea}
  \begin{array}{l@{:\qquad}l}
  \text{behaviour of \p} &
  \begin{tikzpicture}[node distance = 1cm and 4cm, every node/.style={font=\scriptsize}, every label/.style={color=blue,fill=yellow!10}]
	 \node[label={above:$\{ c \leq 5, \mathit{mem} = 0 \}$}] (aq0) {$\aQzero$};
    \node[right = of aq0, label={above:$\{ 5 \leq c \leq 10,\ \mathit{mem} < 3 \}$}] (aq1){$q_1$};
	 \path[->,draw] (aq0) edge node[above] {$\aout$} (aq1);
  \end{tikzpicture}
  \\[2em]
  \text{behaviour of \q} &
  \begin{tikzpicture}[node distance = 1cm and 4cm, every node/.style={font=\scriptsize}, every label/.style={color=blue,fill=yellow!10}]
    \node[label={above: $\{c=0,\mathit{mem}=0\}$}] (bq0){$\aQzero'$};
    \node[right = of bq0, label={above:$\{10 \leq \mathit{mem} \leq 50,\ c = 0.01 \cdot \mathit{mem} \}$}] (bq1){$q_1'$};
	 \path[->,draw] (bq0) edge node[above] {$\ain$} (bq1);
  \end{tikzpicture}
  \end{array}
\end{equation}
where $\aout$ and $\ain$ respectively denote the output and input
communication actions, and each state is decorated with a specification
predicating over the quality attributes in the local states of \p\ and \q.
For instance, both \p\ and \q\ allocate no memory in their initial
states, computation in \p\ may cost up to five monetary units before 
executing the output, and \q\ has no cost since it's just waiting to 
execute the input ($c = 0$).
Likewise, after the communication actions, the local computations
of \p\ and \q\ are specified by the formulae associated to states $q_1$
and $q_1'$.

The interaction between \p\ and \q\ depends on the communication
infrastructure; e.g., asynchronous message-passing yields a run like
\begin{align*}
  \pi: \quad
  \begin{tikzpicture}[node distance = 1cm and 4cm, every node/.style={font=\scriptsize}]
	 \node(s0){$\aConf_0$};
	 \node[right = of s0] (s1){$\aConf_1$};
	 \node[right = of s1] (s2){$\aConf_2$};
	 \path[->,draw] (s0) edge node[above] {$\aout$} (s1);
	 \path[->,draw] (s1) edge node[above] {$\ain$} (s2);
  \end{tikzpicture}
\end{align*}
where first the message is sent by \p\ and then it is eventually
received by \q.

We are interested in analyzing quality properties that admit a measurement,
thus assuming that the QoS attributes are \emph{quantitative}.
These properties encompass both quantitative attributes at the application 
level as well as resource consumption metrics.
For instance, we could be interested in analyzing the monetary cost or 
the number of messages retrieved in a messaging system; but we could
also be interested in analyzing its memory usage or CPU time.
It's important to emphasize that our framework is designed to be 
agnostic and adaptable, allowing for the consideration of any 
quantifiable attribute, regardless of its specific nature.
Furthermore, our framework is specifically designed to enable analysis 
of how quantitative properties of local computations 
influence the system-wide properties.
Hence, we envisage the quality constraints as \emph{contracts} that
the local computations of components should honour.
For instance, the specifications in~\eqref{eq:idea} tell the cost
of local computations in \p\ and \q, they do not predicate on the 
QoS of the communication infrastructure.

Once these quality constraints on local computations are fixed,
natural questions to ask are e.g., \quo{is the memory consumption of
  \q\ along run $\pi$ within a given range?} or \quo{is the overall
ry cost below a given threshold?}.
Answers to such questions require checking that the \emph{aggregation}
of the values of the quality attributes along the run $\pi$
entails the properties.
Interestingly, how to aggregate those values depends on the quality
attributes.
For instance, the aggregation of memory consumption can be computed by
taking the maximum, while the aggregation of monetary cost can be
computed as the sum.
%
We work under the hypothesis that developers have no control
over communication infrastructure.
More precisely, QoS aspects related to how communications are realised
are not under the control of applications' designers.
Instead, designers have control over local computations, thus suggesting
that QoS constraints are naturally associated to states of components.
Indeed, we rely on behavioural types 
(such as~\cite{gpsy16,adsgpt19,bmt19,DBLP:conf/popl/BasuBO12,hlvccdmprt16}) which 
abstract away low level details.

\paragraph{Contributions}
We propose a framework for the design and analysis of QoS-aware
distributed systems, enabled by the following technical contributions:
\begin{description}
\item[Models for QoS attributes.]
  \cref{sec:QoS-choreo} presents a straightforward extension of
  communicating finite-state machines (CFSMs~\cite{brand:jacm-30_2};
  reviewed in \cref{sec:pre}) to express QoS aspects of components.
  Basically, we assign to each state of CFSMs a QoS specification
  as in~\eqref{eq:idea}.
  We adopt real-closed fields (RCFs, cf.~\cref{sec:pre}) to abstractly
  represent QoS values; besides being a complete and decidable
  abstract formalisation of the first-order theory of the real numbers,
  RCFs are instrumental for a smooth definition of our framework.
\item[A dynamic temporal logic for QoS.]  \cref{sec:logic} introduces a logic, dubbed
  $\QL$, to express QoS properties.
  Taking inspiration from \emph{Propositional Dynamic Linear Temporal
	 Logic} (DLTL)~\cite{henriksen:apal-96_1_3}, $\QL$ indexes temporal
  modalities with \emph{global choreographies}~\cite{tuosto:jlamp-95}
  (g-choreographies, \cref{sec:pre}), a model of global views of
  choregraphies, in order to predicate over QoS properties of the whole
  system.
  This is a distinct characteristic of $\QL$ that we comment
  in~\cref{sec:rw}.
\item[A semi-decision procedure for $\QL$.] \cref{sec:analysis}
  proves $\QL$ to be semi-decidable by providing a $k$-bounded
  semi-decision procedure and relying on the decidability of the
  theory of real-closed fields \cite{tarski:RM-109} to check QoS 
  constraints in atomic formulae.
  A distinctive aspect of the procedure is that it can be used 
  as a bounded model-checking procedure 
  of $\QL$ formulae.
\end{description}
\cref{sec:conclu} draws some conclusions and points out some further
lines of research.
%


\section{Related Work}\label{sec:rw}
The relevance of the problem addressed here has been already
highlighted by other researchers~\cite{ich12,kgi13}.
There is a vast literature on QoS, spanning a wide range of
contexts and methods~\cite{Aleti2013SoftwareAO,hayyolalam:jnca-110}.
This paper can be positioned in the category of general
application-level QoS.
The combination of RCFs and our behavioural types aims to capture
essential aspects for applications' quantitative analysis while
striving for generality.
In this vein, a proof-of-concept methodology based on behavioural
types has been proposed in~\cite{gdgln16} for client-server systems.
To the best of our knowledge, our is the first work blending
behavioural types with QoS and offering a decision procedure for
multiparty protocols.

In order to abstractly capture QoS (instead of focusing on specific
attributes) we adopt RCFs.
Other abstract models of QoS such as quantales~\cite{Rosenthal90} or
c-semirings~\cite{buscemi:esop07,lm05,dfmpt05} have been proposed.
We opted for RCFs due to their inherent decidability, which is 
crucial for ensuring the decidability of our logic. Moreover, 
RCFs offer practical advantages as they can be readily employed 
in modern SMT (satisfiability modulo theories) solvers \cite[Chapter 33]{biere21}.

Theory presentations over QoS attributes are used
in~\cite{martinezsune:coordination19} to enable the automatic analysis
of QoS properties with a specification language that only considers convex
polytopes; this restriction is not present in our language.
Also, the approach
in~\cite{martinezsune:coordination19} can be thought as
\quo{monolithic}, in the sense that specifications are given
considering the system as a black box.
We instead assign QoS contracts to states of components and then
aggregate them in order to analyze properties along executions of the
behavior emerging from interactions.

The use of choreographic methods for non-functional analysis yields
other advantages.
For instance, QoS contracts of components are derived from global
specifications~\cite{ich12}.
These contracts can then be used for run-time prediction, adaptive
composition, or compliance checking, similarly to what is done
in~\cite{kgi13}.
This top-down approach can be transferred to behavioural types as well
similarly to what has been done in~\cite{bmt19,bhty10} for qualitative
properties.
The framework proposed in~\cite{vissani:places15} uses CFSMs as a 
dynamic binding mechanism of services but only considers the 
communicational aspects of the software component. Such a framework
could be extended to include QoS attributes as well by leveraging
the results presented in this paper.

Our $\QL$ logic takes inspiration from \emph{dynamic linear temporal
  logic} ($\DLTL$)~\cite{henriksen:apal-96_1_3} which blends
trace semantics (akin \emph{linear temporal logic}~\cite{pnueli:tcs-13_1}) 
 and regular expressions over a set of atomic actions (akin programs in \emph{propositional
  dynamic logic}~\cite{pratt:ieee-sfcs76}). 
Intuitively a key difference is that, unlike $\DLTL$, $\QL$
does not predicate about the behaviour of sequential programs; rather
$\QL$ describes properties of asynchronous message-passing systems.
This requires a modification of the syntax of $\DLTL$; in
fact, the syntax of $\QL$ is essentially the same of $\DLTL$
barred for the indexes of modalities, which become choreographies of
interactions.
This straightforward modification has deep impact on the semantics
which requires a complete redefinition (see \cref{sec:logic} for
further details).
Another key difference is that, while $\DLTL$ is propositional, 
$\QL$'s atomic formulae are first order formulae on QoS attributes.
As a consequence, not only $\QL$ can express usual temporal properties, 
such as safety and liveness ones, but temporal properties constraining
the value of QoS attributes.
These points of comparison with $\DLTL$ apply in the 
same way to a similar logic called \emph{linear dynamic logic} 
($\mathsf{LDL}$), introduced first in \cite{vardi:eptcs-54} and later 
formalized for finite traces in \cite{degiacomo:ijcai13}.


\section{Preliminaries}\label{sec:pre}

This section
surveys 
background material underpinning 
our work.
We first describe the protocol used as a running example, then we
review our choreographic model 
and we briefly
recall \emph{real-closed fields}. 

\noindent\textbf{A running example.\ }
Through the paper we will use a (simplified variant) of the POP
protocol~\cite{rfc937}.
This protocol allows mail clients to access a remote mailbox and
retrieve e-mails.
In the POP protocol a client starts the communication by sending a
message of type $\msg[helo]$ to a POP server (note that protocol
specifications are oblivious of messages' payload).\footnote{Our
  framework can handle multiparty protocols; however, our examples are
  two-party for simplicity. Also, we stick to the types of messages as
  carefully described in the protocol specifications~\cite{rfc937}.  }
The server replies with the number of unread messages in the mailbox
using a message of type $\msg[int]$.
At this point the client can either halt the protocol or
read one of the e-mails.
These options are selected by sending a message of type $\msg[quit]$
or of type $\msg[read]$ respectively.
In the former case, the server acknowledges with a message of type
$\msg[bye]$ and the protocol ends.
In the latter case, the server sends the client the number of bytes of
the current unread message in a message of type $\msg[size]$.
Next, the client has again a choice between quitting the protocol (as
before) or receiving the email (selected in the $\msg[read]$ message)
by sending the server a message of type $\msg[retr]$.
In the latter case the server sends the email with a message of type
$\msg[msg]$, the client answers with a message of type $\msg[ack]$
and the reading process starts again.

%


\noindent\textbf{A choreographic model.\ }
%
We use \emph{global choreographies}~\cite{tuosto:jlamp-95} to specify
the global view of communicating systems whose local view are rendered
as \emph{communicating finite state machines}~\cite{brand:jacm-30_2}.

Hereafter, we fix a set $\PSet$ of \emph{participants} and a set
$\msgset$ of (types of) \emph{messages} such that
$\PSet \cap \msgset = \emptyset$.
We start by surveying the definition of g-choreographies.

\begin{definition}[Global choreographies~\cite{tuosto:jlamp-95}]\label{def:g-choreography}
  A \emph{global choreography over $\PSet$ and $\msgset$}
  (\emph{g-choreography} for short) is a term $\aG$ that can be
  derived in
  \begin{align*}
	 \aG \bnfdef \gempty \bnfmid \gint[] \bnfmid \gseq[] \bnfmid \gpar[]
	 \bnfmid \gcho[] \bnfmid \grec
  \end{align*}
  where $\p, \q \in \PSet$, $\p \neq \q$ and $\msg \in \msgset$.
\end{definition}
Intuitively, a g-choreography specifies the communication protocol of
participants.
The basic g-choreography is the empty one $\gempty$, which specifies
that no communications should happen.
An \emph{interaction} $\gint[]$ specifies that participants $\p$ and
$\q$ (are expected to) exchange a message of type $\msg$;
  it is worth remarking that we assume asynchronous communication
  where the sender $\p$ does not wait for $\q$ to consume $\msg$
  to continue its execution.
Moreover, g-choreographies can be composed sequentially ($\gseq[]$),
in parallel ($\gpar[]$), and in non-deterministic choices
($\gcho[]$); we assume that $\gempty$ is the neutral element
  of $\gseq[][\_][\_]$, $\gpar[][\_][\_]$, and $\gcho[][\_][\_]$.
  Note that, due to asynchrony in the communication, in a sequential
  composition $\gseq[]$, outputs in $\aG'$ can occur before $\aG$ is
  fully executed; for instance, the distributed execution of
  $\gseq[][{\gint[]}][{\gint[][c][m'][b]}]$ allows the output from
  $\ptp[c]$ to happen before the one from $\p$.
Finally, a g-choreography may be iterated ($\grec[]$).

\begin{example}[A g-choreography for POP]\label{ex:gpop}
  Our running example can be expressed as the g-choreography
  $\gpop{pop} =
  \gcho[][{\gseq[][{\gint[][c][helo][s]}][{\gpop{start}}]}][{\gpop{quit}}]$
  where
  \begin{align*}
	 \begin{array}{l@{\quad}l}
		\gpop{start} = 
    \gseq[][
      {\gint[][s][int][c]}
      ][
        {\gseq[][
          {\grec[][({\gcho[][\gpop{read}][{\gseq[][\gpop{read}][\gpop{retr}]}]})]}
          ][
            {\gpop{quit}}
            ]
        }
        ]
		& \gpop{read} = \gseq[][{\gint[][c][read][s]}][{\gint[][s][size][c]}]
		\\
		\gpop{retr} = \gseq[][{\gint[][c][retr][s]}][{\gint[][s][msg][c];\gint[][c][ack][s]}]
		& \gpop{quit} = \gseq[][{\gint[][c][quit][s]}][{\gint[][s][bye][c]}]
  \end{array}
\end{align*}
($\gseq[][\_][\_]$ takes precedence over $\gcho[][\_][\_]$).
%
%
\finex
\end{example}

The participants of a communicating system interact through
\emph{channels} borrowed from the set
$\chset = \set{(\p, \q) \in \PSet \times \PSet \sst \p \neq \q}$.
A channel $(\p,\q) \in \chset$ (written $\achan$ for short) allows
$\p$ to asynchronously send messages to $\q$ through an unbounded FIFO
buffer associated to $\achan$.
The set of \emph{communication actions} is
$\lset = \lset^!  \cup \lset^?$ where
$\lset^! = \{ \aout \sst \achan \in \chset \text{ and } \msg \in
\msgset\}$ and
$\lset^? = \{ \ain \sst \achan \in \chset \text{ and } \msg \in
\msgset\}$ are respectively the set of \emph{output} and \emph{input}
actions.
%
The \emph{language} $\rlang[\aG]$ of a g-choreography $\aG$ 
is essentially the set of all possible sequences in $\lset$
compatible with the causal relation induced by $\aG$.
Since $\rlang[\aG]$ is prefix-closed, we write $\hat{\rlang}[\aG]$
for the set of sequences in $\rlang[\aG]$ that are not proper prefixes
of any other sequence in $\rlang[\aG]$.
The technical definition of $\rlang[\aG]$, immaterial here,
can be found in~\cite{tuosto:jlamp-95}. 
We will adapt CFSM~\cite{brand:jacm-30_2} to model the QoS-aware \emph{local
  view} of a system.

\begin{definition}[Communicating systems~\cite{brand:jacm-30_2}]\label{def:cfsm}
  A \emph{communicating finite-state machine} (\emph{CFSM})
  is a finite transition system $\aCM = (\aQ,\aQzero,\aTrs)$ where
  \begin{itemize}
  \item $\aQ$ is a finite set of {\em states} with $\aQzero \in \aQ$
	 the \emph{initial} state, and
  \item $\aTrs\ \subseteq \ \aQ \times \lset \times \aQ$; we write
	 $q \trans{\al} {q'}$ for $(q,\al,q') \in \aTrs$.
  \end{itemize}
  For $\aout \in \lset$ (resp.  $\ain \in \lset$), let
  $\subject[\aout] = \p$ (resp.  $\subject[\ain] = \q$).
  Given $\p \in \PSet$, $\aCM$ is \emph{$\p$-local} if
  $\subject[\al] = \p$ for each $q \trans{\al} {q'}$.
  A \emph{(communicating) system} is a map
  $\aCS = (\aCM_{\ptp})_{\ptp \in \ptpset}$ assigning a $\p$-local
  CFSM $\aCM_{\ptp}$ to each $\ptp \in \ptpset$.
\end{definition}

\begin{example}[Communicating system for POP]\label{ex:pop}
    The following CFSM exhibits a behaviour of a POP client
    compatible with the protocol in
    \cref{ex:gpop} because its executions yield a subset of 
    the client's execution specified there.
{\small  \begin{align*}
   \begin{tikzpicture}[node distance=1cm and 1.3cm, every node/.style={scale=1,transform shape}]
	 \node (d) at (0,0) {};
	 \node[cnode,right= .3cm of d] (0) {};
	 \foreach \s/\t in {0/1,1/2,2/3,3/4,4/5,5/6} {
		\node (\t) [cnode, right = of \s] {};
	 }
	 \foreach \s/\l/\t in {0/\aout[c][s][helo]/1,1/\ain[s][c][int]/2,2/\aout[c][s][read]/3,3/\ain[s][c][size]/4,4/\aout[c][s][retr]/5,5/\ain[s][c][msg]/6} {
		\path[line,sloped] (\s) edge node[below]{$\l$} (\t);
	 }
	 \node[cnode,above = of 0] (7) {};
	 \node[cnode,right = of 7] (8) {};
	 \path[line] (d) -- (0);
	 \path[line] (6) edge[bend left=-35] node[below]{$\aout[c][s][ack]$} (2);
	 \path[line] (0) edge node[left]{$\aout[c][s][quit]$} (7);
	 \path[line] (2) edge node[left]{$\aout[c][s][quit]$} (7);
	 \path[line] (4) edge[sloped] node[above]{$\aout[c][s][quit]$} (7);
	 \path[line] (7) edge node[above]{$\ain[s][c][bye]$} (8);
  \end{tikzpicture}
  \end{align*}
}  %
  For a POP server, being a two-party protocol, we can use a dual CFSM obtained by replacing
  send actions with corresponding receive actions and
  vice versa.
  \finex
\end{example}
%

The asynchronous communication between participants is formalised by a labelled transition system (LTS) tracking the
(local) state of each CFSM and the content of each buffer (i.e. communication channel) in the
system.
A \emph{configuration} of a communicating system $\aCS$ is a pair
$\aConf = \csconf q \abuffer$ where $\vec q$ and $\vec \abuffer$
respectively map participants to states and channels to sequences of
messages; state $\vec q(\p)$ keeps track of the state of the machine
$\aCM_{\p}$ and buffer $\abuffer(\achan)$ yields the messages
sent from $\p$ to $\p[b]$ and not yet consumed.
The \emph{initial} configuration $\aConf_0$ is the one where, for all
$\p \in \ptpset$, $\vec q(\p)$ is the initial state of the
corresponding CFSM and $\abuffer(\achan)$ is the empty sequence for
all $\achan \in \chanset$.

A configuration $\aConf'= \csconf {q'} {\abuffer'} $ is {\em
  reachable} from another configuration
$\aConf = \csconf {q} {\abuffer}$ by \emph{firing a transition $\al$},
written $\aConf \TRANSS{\al} \aConf'$, if there is a message
$\msg \in \msgset$ such that either (1) or (2) below holds:
\begin{center}
  \begin{tabular}{l@{\hspace{.45cm}}r}
    \begin{minipage}{.47\linewidth}\small
      1.  $\al = \aout[@][@][@]$ with
		$\vec q(\p) \trans[\p]{\al} {q'}$ and
        \begin{itemize}
        \item[a.] $\vec q' = \vec q[\p \mapsto q']$
        \item[b.] $\abuffer' = \abuffer[\achan \mapsto \abuffer(\achan).\msg]$
        \end{itemize}
    \end{minipage}
    &
    \begin{minipage}{.47\linewidth}\small
      2.
        $\al = \ain[@][@][@]$ with $\vec q(\q) \trans[\q]{\al} {q'}$
        and 
        \begin{itemize}
        \item[a.] $\vec q' = \vec q[\q \mapsto q']$ and
        \item[b.] $\abuffer = \abuffer'[\achan \mapsto \msg.\abuffer'(\achan)]$.
        \end{itemize}
    \end{minipage}
  \end{tabular}
\end{center}
Condition (1) puts $\msg$ on channel $\ptp\ptp[B]$, while (2) gets
$\msg$ from channel $\ptp\ptp[B]$.
In both cases, any machine or buffer not involved in the transition
is left unchanged in the new configuration $\aConf'$.
\begin{example}[Semantics of CFSMs]
  For the run $\pi$ of the communicating system in \eqref{eq:idea}
  (cf. \cref{sec:intro}) we have, for $i \in \{0,1,2\}$,
  $\aConf_i = \csconf{q_i}{\abuffer_i}$, where
  $\vec{q_0} = \{\p \mapsto \aQzero, \q \mapsto \aQzero'\}$,
  $\abuffer_0 = \{\achan \mapsto \epsilon,
  \achan[b][a] \mapsto \epsilon\}$,
  $\vec{q_1} = \{\p \mapsto q_1, \q \mapsto \aQzero'\}$,
  $\abuffer_1 = \{\achan \mapsto \msg, \achan[b][a] \mapsto
  \epsilon\}$, and $\vec{q_2} = \{\p \mapsto q_1, \q \mapsto q_1'\}$,
  $\abuffer_2 = \abuffer_0$.
  \finex
\end{example}
Let $\aCS$ be a communicating system.
A sequence $\pi = (\aConf_i,\al_i,\aConf_{i+1})_{i \in I}$ where
$I$ is an initial segment of natural numbers (i.e., $i-1 \in I$ for
each $0 < i \in I$) is a run of $\aCS$ if
$\aConf_i \TRANSS{\al_i} \aConf_{i+1}$ is a transition of $\aCS$ for
all $ i \in I$.
The set of runs of $\aCS$ is denoted as $\Delta^\infty_{\aCS}$
and the set of runs of length $k$ is denoted as $\Delta^k_{\aCS}$.
Note that $\Delta^\infty_{\aCS}$ may contain runs of infinite length, 
the set of finite runs of $\aCS$ is the union of all $\Delta^k_{\aCS}$ 
and will be denoted as $\Delta_{\aCS}$.
Given a run $\pi$,
 we define
$\rlang[\pi]$ to be the sequence of labels $(\al_i)_{i \in I}$.
The \emph{language} of $\aCS$ is the set
$\rlang[\aCS] = \{\rlang[\pi] \sst \pi \in
\Delta^\infty_{\aCS}\}$.
Finally, $\mathit{prf}: \Delta^\infty_{\aCS} \to 2^{\Delta_{\aCS}}$
maps each run $\pi \in \Delta^\infty_{\aCS}$ to its set of finite
prefixes.
As usual, for all $\pi \in \Delta^\infty_{\aCS}$,
the
empty prefix
$\epsilon$ belongs to $\mathit{prf} (\pi)$.
For convenience, we will occasionally write
$ \aConf_0 \TRANSS{\al_0} \aConf_1\ \ldots\
\aConf_n \TRANSS{\al_n} \aConf_{n+1}$ for finite sequences.



\noindent\textbf{Real-closed fields.\ }
Real numbers are natural candidates to express quantitative attributes
of a software artifact.
We adopt \emph{real-closed fields} (RCFs), which is the formalisation of the
first-order theory of the real numbers, as a foundation for QoS
values.
Let $\Sigma_{\mathrm{field}}$ denote the first-order signature $\conf{\{0, 1\},\{+, \cdot\},\{<\}}$. An ordered field is a first-order theory presentation $\conf{\Sigma_{\mathrm{field}}, \Gamma_{\mathrm{field}}}$, where $\Gamma_{\mathrm{field}}$ consists of the \emph{field} axioms as well as the axioms defining $<$ as a strict total order relation.
\mbox{Real-closed} fields are ordered fields whose non-empty subsets all have
a supremum.
Tarski's axiomatization of real-closed fields, denoted here as $\conf{\Sigma_{\mathrm{RCF}}, \Gamma_{\mathrm{RCF}}}$, was introduced in \cite{tarski:RM-109}. Tarski further demonstrated the existence of a decision procedure for this first-order theory of real numbers in \cite[Thm.~37]{tarski:RM-109}.
%
Thus, the main reason for selecting RCFs as the foundation for QoS lies in the fact
that first-order theories extending them using elementary operations are decidable,
providing effective means for analysis. 



\section{Quality of Service of Communicating Systems}\label{sec:QoS-choreo}

In this section we extend CFSMs with QoS specifications in order to
express QoS contracts of components in message-passing systems.
Basically, each state of CFSMs is assigned a QoS contract specifying
the usage of computational resources.
We formalise QoS contracts as \emph{QoS specifications}
which are theory presentations over the
RCFs, noted as $\conf{\Sigma, \Gamma}$, paired up with \emph{aggregation operators}, noted as $\oplus^\genericAttr$,
to define how each
QoS attribute accumulates along a communicating system. 
These aggregation operators will be essential to formally define the notion of aggregation along a run, as shown later in \cref{ex:aggregation-function}.
\begin{definition}\label{def:qos-spec}
  A \emph{QoS specification} $\conf{\Sigma, \Gamma}$ is a
  (first-order) theory presentation extending
  $\conf{\Sigma_\mathrm{RCF}, \Gamma_\mathrm{RCF}}$ as follows:
  \begin{enumerate}
  \item
	 $\Sigma = \conf{\{0, 1\} \cup \constset[Q], \{+, \cdot\} \cup
		\{\oplus^\genericAttr\}_{\genericAttr \in \constset[Q]}, \{<\}}$, where $\constset[Q]$
	 is a finite set of constant symbols (other that
	 $0$ and $1$) representing the \emph{quantitative attributes} (from
	 now on referred to as \emph{QoS attributes}) and, for each
	 $\genericAttr \in \constset[Q]$, $\oplus^\genericAttr$ is an associative
	 algebraic binary operator 
   and
  \item $\Gamma = \Gamma_\mathrm{RCF} \cup \text{$\Gamma^\prime$}$,
	 being $\Gamma^\prime$ a finite set of
	 first-order formulae formalising specific constraints over 
   the QoS attributes in $\constset[Q]$.
  \end{enumerate}
  The class of QoS specifications will be denoted as $\qosspecs(\constset[Q])$.
\end{definition}
In order to preserve decidability of QoS properties, we only consider
QoS specifications involving additional constant symbols representing
the QoS attributes of components.
Aggregation operators are required to be algebraic 
because the extension of the theory must be kept in the first-order
fragment (uninterpreted function or predicate symbols must be avoided
to preserve decidability).
It is worth noticing that aggregation operators strongly depend on the
nature of each specific attribute; for example, natural aggregation
operators for memory and time are the maximum function and sum respectively.

\begin{example}[QoS specification]\label{ex:qosspec-detailed}
  With reference to \cref{ex:pop}, possible quantitative attributes of
  interest in an implementation of POP are
  $\constset[Q]=\{t, c, m\}$ representing CPU \textit{t}ime, monetary \textit{c}ost, 
  and \textit{m}emory usage, respectively.
  Then a QoS specification that characterises low computational
  costs, where no internal process consumes significant amount of
  resources, can be written according to \cref{def:qos-spec} as follows:
  \begin{align*}
    \Sigma = &\ \conf{\{0, 1\} \cup \{t, c, m\}, \{+, \cdot\} \cup
     \{\oplus^t, \oplus^c, \oplus^m\}, \{<\}} \\
   \Gamma = &\ \Gamma_\mathrm{RCF} \cup \Gamma^\prime_\mathrm{Low}
  \end{align*}%
  where $\oplus^t = +,\ \oplus^c = +,\ \oplus^m = \mathit{max}$, and
  $\Gamma^\prime_\mathrm{Low} = \{t \leq .01, c \leq .01, m \leq .01\}$
  \finex
\end{example}

From now on, we fix a set of constant symbols $\constset[Q]$ which we
omit from the set of \emph{QoS specifications}, that will be referred
to just as $\qosspecs$.  It is worth noting that, when $\constset[Q]$
is fixed, a QoS specification
$\conf{\Sigma, \Gamma_\mathrm{RCF} \cup \Gamma^\prime}$, is
  completely determined by $\Gamma^\prime$.
  Therefore, we can unabiguously refer
  to 
  a QoS specification using its set of formulas $\Gamma^\prime$.
  Thus, the QoS specification in \cref{ex:qosspec-detailed}
is $\Gamma^\prime_\mathrm{Low}$.
\begin{example}[QoS for POP]\label{ex:qosspec}
The following QoS specifications formalise the costs associated to different
activities in the POP protocol of \cref{ex:pop}.
  \begin{align*}
    & \Gamma^\prime_\mathrm{Chk} = \{ t \leq 5, c = 0.5, m = 0 \} \\
    & \Gamma^\prime_\mathrm{Mem} = \{ 1 \leq t \leq 6, c = 0, m \leq 64 \} \\
    & \Gamma^\prime_\mathrm{DB} = \{ t \leq 3 \implies (\exists x)(0.5 \leq x \leq 1 \land c = t \cdot x),
                                            t > 3 \implies c = 10,
                                            m \leq 5 \}
  \end{align*}
  Basically, $\Gamma^\prime_\mathrm{Chk}$
  formalizes the costs associated to the activity of
  integrity checking a message,
  $\Gamma^\prime_\mathrm{Mem}$ to the activity of a server
  receiving a message, and $\Gamma^\prime_\mathrm{DB}$ to
  establishing that the monetary cost is fixed if the insertion takes
  more than three time-units and it is a fraction of the
  execution time, otherwise.
  \finex
\end{example}
 
We now extend communicating systems (cf. \cref{sec:pre}) with
QoS-specifications.
\begin{definition}[QoS-extended CFSMs]\label{def:cfsm-extend}%
  A \emph{QoS-extended CFSM} is a tuple
  $\aCMq = \conf{\aCM, F, \qos}$ where:
  \begin{itemize}
  \item $\aCM = \conf{Q, q_0, \rightarrow }$ is a CFSM,
  \item $F \subseteq Q$ is a set of \emph{final states} of $\aCM$, and
  \item $\qos: Q \to \qosspecs$ maps states of $\aCM$ to QoS
	 specifications.
 \end{itemize}
 A \emph{QoS-extended communicating system} $S^{\mathit{QoS}}$ is a
 map $(\aCMq_{\p})_{\p \in \PSet}$ assigning an $\p$-local
 QoS-extended CFSM $\aCMq_{\p}$ to each $\p \in \PSet$.  A
 configuration $\csconf q \abuffer$ of $S^{\mathit{QoS}}$ is a
 \emph{final configuration} if $\vec q(\p) \in F_{\p}$ for every
 $\p \in \PSet$.
\end{definition}

\begin{example}[QoS-extended CFSMs] 
  \label{ex:pop-cfsm-qos}
  An extended CFSM of the POP client in \cref{ex:pop} with the QoS
  specifications of \cref{ex:qosspec} is as follows:
  %
  \begin{center}
  \begin{tikzpicture}[node distance=1cm and 1.3cm, every node/.style={scale=1,transform shape}, every label/.style={color=blue,fill=yellow!10,scale=.7}]
	 \def\low{\Gamma^\prime_\mathrm{Low}}
	 \def\db{\Gamma^\prime_\mathrm{DB}}
	 \def\mem{\Gamma^\prime_\mathrm{Mem}}
	 \def\chk{\Gamma^\prime_\mathrm{Chk}}
	 \node (d) at (0,0) {};
	 \node[cnode,right= .3cm of d, label={-95:$\low$}] (0) {};
	 \foreach \s/\l/\t in {0/\low/1,1/\db/2,2/\low/3,3/\db/4,4/\mem/5,5/\chk/6} {
		\node (\t) [cnode, right = of \s, label={$\l$}] {};
	 }
	 \foreach \s/\l/\t in {0/\aout[c][s][helo]/1,1/\ain[s][c][int]/2,2/\aout[c][s][read]/3,3/\ain[s][c][size]/4,4/\aout[c][s][retr]/5,5/\ain[s][c][msg]/6} {
		\path[line,sloped] (\s) edge node[below]{$\l$} (\t);
	 }
	 \node[cnode, above = of 0, label={95:$\low$}] (7) {};
	 \node[cnode, right = of 7, label={10:$\low$}, fill = black] (8) {};
	 \path[line] (d) -- (0);
	 \path[line] (6) edge[bend left=-35] node[below]{$\aout[c][s][ack]$} (2);
	 \path[line] (0) edge node[left]{$\aout[c][s][quit]$} (7);
	 \path[line] (2) edge[sloped] node[below,near end]{$\aout[c][s][quit]$} (7);
	 \path[line] (4) edge[sloped] node[above]{$\aout[c][s][quit]$} (7);
	 \path[line] (7) edge node[above]{$\ain[s][c][bye]$} (8);
  \end{tikzpicture}
  \end{center}
  where the filled state is the only final state.
  %
  %
  Each state is assigned a QoS specification given in
  \cref{ex:qosspec} according to the following idea.
  States where the client performs negligible computations are
  assigned the QoS specification
  $\Gamma^\prime_\mathrm{Low}$.
  The remaining states are assigned QoS specifications as follows.
  The local states where \p[c] performs a database insertion (right
  after receiving an $\msg[int]$ or $\msg[size]$ message) and those
  where \p[c] accesses the memory (right before receiving an unread
  e-mail) are constrained respectively by $\Gamma^\prime_\mathrm{DB}$
  and $\Gamma^\prime_\mathrm{Mem}$; finally,
   $\Gamma^\prime_\mathrm{Chk}$ constrains the states where \p[c]
	performs an integrity check (right after receiving an unread
	e-mail).
  %
  \finex
\end{example}

Notice that \cref{def:cfsm-extend} requires every
  state of a CFSM to be assigned a QoS specification. However, in most
  cases, most states will have the same QoS specification, as it is the
  case of $\Gamma^\prime_\mathrm{Low}$ in
\cref{ex:pop-cfsm-qos}; typically one only has to
  identify the QoS costs specific to few states.

The semantics of QoS-extended communicating systems is defined in the
same way as the semantics of communicating systems. This is a
consequence of the fact that QoS specifications do not have any effect
on communications.


\section{$\QL$: A Dynamic Logic for QoS}\label{sec:logic}
%
%
To describe QoS properties we introduce $\QL$, a logical language akin
$\DLTL$.
%
%
\begin{definition}[QoS formulae]\label{def:globalQoSprop}
  The \emph{QoS logic} $\QL$ consists of the smallest set of
  formulae that can be obtained from the following grammar:
  \begin{align*}
    \gqosprop & \bnfdef \top \bnfmid \psi \bnfmid \neg \gqosprop \bnfmid \gqosprop \lor \gqosprop \bnfmid \gqosprop\ \mathcal{U}^{\mathsf{G}} \gqosprop
  \end{align*}
  where $\psi$ is a formula in a theory presentation in
  $\qosspecs$, and $\aG$ is a g-choreography over 
  $\PSet$ and $\msgset$ (see \cref{def:g-choreography}).
\end{definition}
Atomic formulae express constraints over quantitative attributes.
Akin $\DLTL$, properties of runs are linear temporal formulae
where the until operator is indexed with a global choreography
 $\aG$. In essence, the role of $\aG$ is to restrict the set of
  runs to be considered for the satisfiability of the until.
  Global choreographies are suitable for this purpose because they
  are a declarative and compact way of characterizing the behaviour 
  of asynchronous message-passing systems.
The possibility modality $\conf \aG \gqosprop$ is defined as
$\top \mathcal{U}^\aG \gqosprop$ and the necessity modality
$[\aG] \gqosprop$ is defined (dually) as
$\neg \conf \aG \neg \gqosprop$.
Finally, propositional
connectives $\land$ and $\implies$ are defined as usual.

The following example shows how to express non-functional properties
of specific runs of the system in $\QL$.
\begin{example}[QoS properties of POP protocol]\label{ex:qos-prop}
  %
  We can use the g-choreographies and the $\QL$ formula below to state
  that, unless the cost is zero for the first three e-mails read, the
  cost is bounded by $10$ times the CPU time, and the memory
  consumption is bounded by $5$.
  We define
  \begin{align*}
	 \gqosprop \equiv &\ [\aG_3]( c > 0 ) \implies [\aG_3;{\grec[][\aG_{\mathsf{msg}}]}]\big( ( c \leq t \cdot 10 ) \land (m \leq 5) \big)
							  \qqand[where]
	 \\	 
    \aG_3 = &\ \gint[][c][helo][s]; \gint[][s][int][c];\aG_{\mathsf{msg}};\aG_{\mathsf{msg}};\aG_{\mathsf{msg}} \qqand
	 \\
    \aG_{\mathsf{msg}} = &\ \gint[][c][read][s]; \gint[][s][size][c]; \gint[][c][retr][s]; \gint[][s][msg][c]; \gint[][s][ack][c]
  \end{align*}
  Intuitively, for $\gqosprop$ to hold, either the first three message
  retrievals must have zero cost in any run of the system, or on every
  subsequent message retrieval, the total cost and memory consumption
  fall within the specified bounds.
  \finex
\end{example}

A $\QL$ formula (like $\gqosprop$ in \cref{ex:qos-prop}) can be used
in quantitative analyses by \emph{aggregating} the values of the QoS
attributes along the runs of the system.
More precisely, given a run
$\pi$, our interpretation is that, for each transition $s_i \TRANSS{\al_i} s_{i+1}$ of $\pi$, the
obligations stated in the QoS specification of $s_i$ are met after
aggregating QoS information along $\pi$ from state $s_0$ up to state
$s_i$.
Therefore, a central notion in our framework is that of \emph{aggregation function}.
Given a QoS-extended communicating system
$\aCS$, an \emph{aggregation function} $\aggfn_{\aCS} : \Delta_{\aCS}
\to
\qosspecs$ yields a QoS specification capturing the cumulative QoS
attributes along a run $\pi
  \in
  \Delta_{\aCS}$ by \quo{summing-up} QoS specifications of participants' local states.
%
\begin{example}[Aggregation]\label{ex:aggregation}
  Recall the run $\pi$
  \eqref{eq:idea} from \cref{sec:intro}:
  \begin{align*}
	 \begin{tikzpicture}[node distance = .5cm and 3cm, every node/.style={font=\scriptsize}, every label/.style={color=blue,fill=yellow!10}]
		\node[label={above:$\{ c \leq 5, m = 0 \}$}] (aq0) {$\aQzero$};
		\node[left = 1cm of aq0] {\p:};
		\node[right = of aq0, label={above:$\{ 5 \leq c \leq 10,\ m < 3 \}$}] (aq1){$q_1$};
		\path[->,draw] (aq0) edge node[above] {$\aout$} (aq1);
		\node[below = of aq0, label={above: $\{c=0,m=0\}$}] (bq0){$\aQzero'$};
		\node[left = 1cm of bq0] {\q:};
		\node[right = of bq0, label={above:$\{10 \leq m \leq 50,\ c = 0.01 \cdot m \}$}] (bq1){$q_1'$};
		\path[->,draw] (bq0) edge node[above] {$\ain$} (bq1);
		\node(s0) [below = of bq0]{$\aConf_0$};
		\node[left = 1cm of s0] {$\pi$:};
		\node[right = of s0] (s1){$\aConf_1$};
		\node[right = of s1] (s2){$\aConf_2$};
		\path[->,draw] (s0) edge node[above] {$\aout$} (s1);
		\path[->,draw] (s1) edge node[above] {$\ain$} (s2);
	 \end{tikzpicture}
  \end{align*}


  %
  Let $c_{\p}^q$ (resp. $c_{\q}^q$) denote the value of the QoS
  attribute $c$ in the state $q$ of participant \p\ (resp. \q) and
  likewise for the attribute $\mathit{m}$.
  After $\pi$, we expect
	 $\max{\{m^{q_0}_{\p}, m^{q_1}_{\p}, m^{q'_0}_{\q}, m^{q'_1}_{\q}\}}$
   and
	 $c^{q_0}_{\p} + c^{q_1}_{\p} + c^{q'_0}_{\q} + c^{q'_1}_{\q}$
  to respectively be the memory consumption and the overall monetary 
  cost in $s_2$.
  This boils down to aggregate the QoS attributes $c$ and
  $m$ using the maximization and addition operations, respectively.
  \finex
\end{example}

Essentially, the aggregation in this case is obtained by
(1) instantiating the QoS specification associated to the local state
  of participants
  (this is done by renaming attributes as in \cref{ex:aggregation}); and
(2) adding an equation combining all the instances of QoS specifications.
The following formula captures this intuition
  and exemplifies one way in which the aggregation function could be defined.

%
%
%
\begin{example}[Aggregation]\label{ex:aggregation-function}
  Let $\aCS = (\conf{ \aCM_{\p}, \qos_{\p}})_{\p \in \PSet}$,
  we define the aggregation function 
  $\aggfn_{\aCS} : \Delta_{\aCS} \to \qosspecs$ to be
  $\aggfn_{\aCS} (\pi) = f(\pi) \cup g(\pi)$ where
  \begin{align*}
	 f(\pi) = & \bigcup\limits_{\substack{\p \in \PSet\\0 \leq i \leq n}} \psiinst[\qos_{\p}({{\vec q_i}(\p)})][@][{{\vec q_i}(\p)}]
	 \qqand
	 g(\epsilon) = 
		\left\{ \genericAttr = \left(\bigoplus^{\quad \genericAttr}\limits_{\substack{\p \in \PSet}} \psiinst[\genericAttr][@][{\vec q_0}(\p)]\right) \sst \genericAttr \in \constset[Q] \right\} 
		\\
   g(\pi) = &
		\left\{ \genericAttr = \left(\bigoplus^{\quad \genericAttr}\limits_{\substack{0 \leq i < n\\ \p = {\subject[\al_i]}}} \psiinst[\genericAttr][@][{\vec q_i}(\p)]\right) \oplus^\genericAttr \left(\bigoplus^{\quad \genericAttr}\limits_{\substack{\p \in \PSet}} \psiinst[\genericAttr][@][{\vec q_n}(\p)]\right) \sst \genericAttr \in \constset[Q] \right\}
		\qquad \text{if } \pi \not\eq \epsilon
  \end{align*}
where 
$\pi = \conf{\vec q_0 \, ; \, \vec b_0} \TRANSS{\al_0} \dots
  \TRANSS{\al_{n-1}} \conf{\vec q_n \, ; \, \vec b_n} \in
  \Delta_{\aCS}$,
  and
$\psiinst[\Pi] = \{\psiinst \sst \psi \in \Pi\}$ for a set of $\QL$ formulae $\Pi$, and $\psiinst$ is obtained by replacing each QoS attribute $c$
with the symbol $c_{\p}^q$ in the atomic formula $\psi$.
  The intuition is that $f(\pi)$ collects all the QoS specifications of the
  local states of the participants along the run $\pi$, and $g(\pi)$
  uses the aggregation operators to calculate the aggregated values of
  the QoS attributes in the run $\pi$. If we apply this aggregation
  function to the run $\pi$ in \cref{ex:aggregation}, we obtain the
  following:
  \begin{align*}
    f (\pi) =\ & \{ c^{\aQzero}_{\p} \leq 5, m^{\aQzero}_{\p} = 0 \} \cup \{ 5 \leq c^{q_1}_{\p} \leq 10,\ m^{q_1}_{\p} < 3 \}\\
    & \cup \{c^{\aQzero'}_{\q}=0,m^{\aQzero'}_{\q}=0\} \cup \{10 \leq m^{q_1'}_{\q} \leq 50,\ c^{q_1'}_{\q} = 0.01 \cdot m^{q_1'}_{\q} \} \\
    g (\pi) =\ & 	\left\{m = \max{\{m^{q_0}_{\p}, m^{q_1}_{\p}, m^{q'_0}_{\q}, m^{q'_1}_{\q}\}}, c = c^{q_0}_{\p} + c^{q_1}_{\p} + c^{q'_0}_{\q} + c^{q'_1}_{\q}\right\}
    \qquad\qquad\hfill\diamond
  \end{align*}
%
  %
%
\end{example}
%
It is important to emphasize that, in our conception, an aggregation
function relies on a run of the system as its input.  This run
inherently encompasses a specific sequential ordering of the actions
carried out by the participants.  The aggregation operators $max$ and
$+$ used in \cref{ex:aggregation} follow this interpration. As will
become clear in \cref{def:ql-semantics}, this interpretation is
sufficient for the purposes of this paper, since it enables $\QL$ to
specify temporal QoS properties about runs of the system.  However,
one might be interested in a different kind of aggregation that is
aware of local states that are executed in parallel. This may require
some care and possibly to exploit truly-concurrent models, such as
pomsets; this is left for future work.
%
 
The semantics of our logic is defined in terms of QoS-extended
communicating systems.
%
%

\begin{definition}[$\QL$ semantics]\label{def:ql-semantics}
  %
  Given a QoS-extended communicating system $\aCS$, an
  \emph{$\aCS$-model for a QoS property $\gqosprop$} is a pair
  $\conf{\pi, \pi' }$, where $\pi \in \Delta^\infty_{\aCS}$ contains
  a final configuration 
  (see \cref{def:cfsm-extend})
   and $\pi' \in \mathit{prf}(\pi)$ up to such configuration such that
  $\conf{\pi,\pi'} \models_{\aCS} \gqosprop$ where the 
  relation $\models_{\aCS}$ is defined as follows:
  $$
  \begin{array}{rcl}
    \conf{\pi, \pi'} \models_{\aCS} \gqosprop & \mbox{ iff } & \aggfn_{\aCS} (\pi') \vdash_{\mathit{RCF}} \gqosprop \ \ \text{ if } \gqosprop \text{ is an atomic formula}\\ 
    \conf{\pi, \pi'} \models_{\aCS} \neg \gqosprop & \mbox{ iff } & \conf{\pi, \pi'} \models_{\aCS} \gqosprop \text{ does not hold}\\ 
    \conf{\pi, \pi'} \models_{\aCS} \gqosprop_1 \lor \gqosprop_2 & \mbox{ iff } & \conf{\pi, \pi'} \models_{\aCS} \gqosprop_1 \mbox{ or } \conf{\pi, \pi'} \models_{\aCS} \gqosprop_2\\ 
    \conf{\pi, \pi'} \models_{\aCS} \gqosprop_1\ \mathcal{U}^\aG\ \gqosprop_2 & \mbox{ iff } & \text{there exists } \pi'' \text{such that } \rlang[\pi''] \in \hat{\rlang}[\aG], \\
    & & \pi'\pi'' \in \mathit{prf}(\pi) \text{ up to a final configuration in $\pi$,} \\
     & & \conf{\pi, \pi'\pi''} \models_{\aCS} \gqosprop_2 \text{ and, for all } \pi''' \in \mathit{prf} (\pi''),\\
     & & \text{if } \pi''' \neq \pi'' \text{ then } \conf{\pi, \pi'\pi'''} \models_{\aCS} \gqosprop_1.
  \end{array}
  $$
  \noindent

  A QoS property $\gqosprop$ is \emph{satisfiable} in $\aCS$ if there exists
  a run
  $\pi \in \Delta^\infty_{\aCS}$ such that
  $\conf{\pi, \epsilon} \models_{\aCS} \gqosprop$, and it is \emph{valid}
  (denoted as $\models_\aCS \gqosprop$) if,
   for all runs
  $\pi \in \Delta^\infty_{\aCS}$ that contain a final configuration,
  $\conf{\pi, \epsilon} \models_{\aCS} \gqosprop$.
\end{definition}
Negation and disjunction are handled in the standard way.
The definition of the until operator is similar to the standard
operator: $\gqosprop_2$ must hold at some point in the future,
i.e., $\pi'\pi''$ and $\gqosprop_1$ must hold up to that point; the
key difference is that the satisfaction of $\gqosprop_2$ is restricted
to runs where the extension $\pi''$ is in $\hat{\rlang}[\aG]$.
%
Finally, atomic formulae are handled by obtaining the aggregated QoS
of the accumulated run $\pi'$ and using the entailment relation of
RCFs.
%

\section{A semidecision procedure for $\QL$}\label{sec:analysis}

\SetKwFunction{qSat}{\textsc{qSat}}
\SetKwFunction{qModels}{\textsc{qModels}}
\SetKwFunction{qUntil}{\textsc{qUntil}}
\SetKwFunction{qRelease}{\textsc{qRelease}}
\SetKwFunction{traceOf}{\textsc{TraceOf}}

We now establish the semi-decidability of $\QL$ by presenting a
\mbox{$k$-bounded} semidecision procedure relying on three
algorithms: \qSat, \qModels, and \qUntil.
The \qSat algorithm is the main algorithm of the procedure and
determines whether a given formula is satisfiable in a given
system. It relies on \qModels to check if there is a run that
satisfies the formula which, in turn, uses \qUntil to handle the
$\mathcal{U}$ operator.
Let us start by looking at \qSat defined as:
%
%

\begin{algorithm}[H]
  \scriptsize
\DontPrintSemicolon
\SetAlgoLined
\SetKwProg{Fn}{}{\string:}{}
\Fn{\qSat{$\gqosprop,\aCS, k$}}
{
  $i = 0$\;
  \While{$i \leq k$}{
    \ForEach{$\pi \in \Delta_{\aCS}^i$}{
      \If{%
        the last configuration of $\pi$ is final and
        \qModels{$\gqosprop,\aCS,\pi,\epsilon$}
        \label{ln:qsat-qmodels}
      }{
        \Return{\textbf{true}}
      }
    }
    $i = i + 1$\;
  }
  \Return{\textbf{false}}
}
\end{algorithm}
\noindent
Basically, \qSat enumerates all the runs of $\aCS$ up to a given bound
$k$ and checks whether any of them satisfies $\gqosprop$ (recall
that $\Delta_{\aCS}^i$ is the set of all runs of $\aCS$ with length
$i$).
Let us now focus on the algorithm \qModels:
%

\begin{algorithm}[H]
  \scriptsize
\DontPrintSemicolon
\SetAlgoLined
\SetKwProg{Fn}{}{\string:}{}
\Fn{\qModels{$\gqosprop,\aCS,\pi,\pi'$}}
{
  \Switch{$\gqosprop$}{
  \Case{$\top$}{
    \Return{\textbf{true}}
    \label{ln:qsat-true}
  }
  \Case{$\psi$}{
    \Return{
      whether $\aggfn_{\aCS} (\pi') \vdash_{\mathit{RCF}} \psi$
    }
    \label{ln:qsat-rcf}
  }
  \Case{$\neg\gqosprop_1$}{
    \Return{\textbf{\emph{not}} \qModels{$\gqosprop_1,\aCS,\pi,\pi'$}}
  }
  \Case{$\gqosprop_1 \lor \gqosprop_2$}{
    \Return{\qModels{$\gqosprop_1,\aCS,\pi,\pi'$} \textbf{\emph{or}} \qModels{$\gqosprop_2,\aCS,\pi,\pi'$}}
    \label{ln:qsat-or}
  }
  \Case{$\gqosprop_1\ \mathcal{U}^\aG\ \gqosprop_2$}{
    \Return{\qUntil{$\gqosprop_1, \aG, \gqosprop_2,\aCS,\pi,\pi',\epsilon$}}
    \label{ln:qsat-until}
  }
  }
}
\end{algorithm}
\noindent
Following \cref{def:ql-semantics}, \qModels recursively inspects the
$\QL$ formula. It invokes 
\qUntil to handle the $\mathcal{U}$ operator
and the decision procedure of the theory of real-closed fields to
check the atomic formulae.
Let us now look at the algorithm \qUntil:
%

\begin{algorithm}[H]
  \scriptsize
\DontPrintSemicolon
\SetKwProg{Fn}{}{\string:}{}
\Fn{\qUntil{$\gqosprop_1, \aG, \gqosprop_2,\aCS,\pi,\pi', \pi''$}}
{
  \If{$\rlang[\pi''] \in \hat{\rlang}[\aG]$ and $\qModels{$\gqosprop_2,\aCS,\pi,\pi'\pi''$}$
    \label{ln:quntil-base-true-condition}
  }
  {
    \Return{\textbf{true}}
    \label{ln:quntil-base-true}
  }
  \ElseIf{not \qModels{$\gqosprop_1,\aCS,\pi,\pi'\pi''$}
    \label{ln:quntil-base-false-condition}
  }
  {
    \Return{\textbf{false}}
    \label{ln:quntil-base-false}
  }
  \Else{
    Let $\TRANSS{\al}\ q$ be the transition such that $\pi'\pi''\ \TRANSS{\al}\ q \in \mathit{prf}(\pi)$
    \label{ln:quntil-transition}
    \newline (takes the first transition in $\pi$ if $\pi'\pi'' = \epsilon$,
    \newline and it is not defined if $\pi'\pi'' = \pi$)\;
    \If{$\pi'\pi'' = \pi$ or $\rlang[\pi''\ \TRANSS{\al}\ q] \not\in \rlang[\aG]$}{
      \Return{\textbf{false}}
      \label{ln:quntil-empty-false}
    }
    \Else
    {
      \Return{\qUntil{$\gqosprop_1, \aG, \gqosprop_2,\aCS,\pi,\pi', \pi''\ \TRANSS{\al}\ q$}}
      \label{ln:quntil-recursive}
    }
  }
}
\end{algorithm}
\noindent
This procedure takes care of searching for a witness of the
existential in the semantics of $\mathcal{U}$ by starting in the
current prefix $\pi'$ and following the transitions of $\pi$.
According to \cref{def:ql-semantics}, \qUntil searches for a witness of the
existential part of $\mathcal{U}$.
It takes as parameters the complete run $\pi$, the prefix $\pi'$ at
which the $\mathcal{U}$ is being evaluated, and the current extension
$\pi''$ that is used to search for the witness.
If $\pi''$ is enough to reach a verdict, the algorithm returns true or
false accordingly (\cref{ln:quntil-base-true,ln:quntil-base-false}).
Otherwise, it tries to extend $\pi''$ by borrowing the next transition
of $\pi$ (\cref{ln:quntil-transition}).
If such extension exists and is a candidate for being in the language
of $\aG$, the algorithm recursively calls itself with the extended
prefix (\cref{ln:quntil-recursive}).
Hereafter, we fix a QoS-extended communicating system $\aCS$.

\begin{theorem}[\qSat is sound and $k$-bounded complete]
  \label{lem:qsat}
  Given a QoS formula $\gqosprop \in \QL$ and a bound $k$,
    \qSat{$\gqosprop, \aCS,k$} returns \textbf{true} iff
    there exists $\pi \in \Delta^i_{\aCS}$ such that
    $\conf{\pi, \epsilon} \models_{\aCS} \gqosprop$,
    for some $i \leq k$.
\end{theorem}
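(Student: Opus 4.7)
The theorem follows from correctness of the three algorithms, established bottom-up: first \qUntil, then \qModels, and finally \qSat. Throughout, we work with a finite run $\pi$ produced by \qSat, so termination of all recursive searches is immediate. The main obstacle is the proof of \qUntil's correctness, which requires a careful invariant tracking the accumulation of $\gqosprop_1$ obligations at the prefixes traversed along the way.

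The key lemma is the following: for a finite run $\pi \in \Delta_{\aCS}$ and prefixes $\pi', \pi''$ with $\pi'\pi'' \in \mathit{prf}(\pi)$, the call \qUntil{$\gqosprop_1, \aG, \gqosprop_2, \aCS, \pi, \pi', \pi''$} returns \textbf{true} iff there exists an extension $\widetilde{\pi}$ of $\pi''$ with $\pi'\widetilde{\pi} \in \mathit{prf}(\pi)$, $\rlang[\widetilde{\pi}] \in \hat{\rlang}[\aG]$, $\conf{\pi, \pi'\widetilde{\pi}} \models_\aCS \gqosprop_2$, and $\conf{\pi, \pi'\sigma} \models_\aCS \gqosprop_1$ for every $\sigma \in \mathit{prf}(\widetilde{\pi})$ strictly extending $\pi''$ with $\sigma \neq \widetilde{\pi}$. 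This is proved by well-founded induction on $|\pi| - |\pi'\pi''|$. The true base case, \cref{ln:quntil-base-true-condition}, directly witnesses the existential with $\widetilde{\pi} = \pi''$. The false base case, \cref{ln:quntil-base-false-condition}, rules out any strict extension from being a witness, since every such extension would have $\pi''$ as a proper prefix and therefore require $\gqosprop_1$ to hold at $\pi'\pi''$. In the inductive step, the algorithm extends $\pi''$ by the next transition of $\pi$ (\cref{ln:quntil-transition}) and recurses on a strictly shorter suffix; the inductive hypothesis matches the recursive output with the existence of a witness beyond that single step. The invariant maintained across recursive calls is precisely that $\gqosprop_1$ holds at every prefix strictly between the original $\pi''$ and the current one, which is exactly what \cref{def:ql-semantics} demands.

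From this, we derive correctness of \qModels by structural induction on $\gqosprop$: \qModels{$\gqosprop, \aCS, \pi, \pi'$} returns \textbf{true} iff $\conf{\pi, \pi'} \models_\aCS \gqosprop$. The case $\top$ is immediate; atomic formulae are handled by the decidability of real-closed fields~\cite{tarski:RM-109} invoked in \cref{ln:qsat-rcf}; Boolean connectives are routine; and the until case follows from the lemma above applied with $\pi'' = \epsilon$. The theorem then follows by direct inspection of \qSat: the procedure returns \textbf{true} iff some $\pi \in \Delta^i_{\aCS}$ with $i \leq k$ ends in a final configuration and \qModels{$\gqosprop, \aCS, \pi, \epsilon$} returns \textbf{true}, which by correctness of \qModels is equivalent to the existence of such a $\pi$ with $\conf{\pi, \epsilon} \models_\aCS \gqosprop$.
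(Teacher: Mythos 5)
Your overall architecture coincides with the paper's: correctness of \qSat\ is reduced to a soundness-and-completeness lemma for \qModels, proved by structural induction on $\gqosprop$, which in turn rests on an inductive analysis of \qUntil\ along the remaining transitions of $\pi$. The only real difference is presentational: you fuse the paper's two separate implications (\cref{lem:quntil-correct,lem:quntil-complete}, each carrying an explicit precondition about the prefixes of the current $\pi''$) into a single self-contained biconditional. That is a legitimate and arguably cleaner packaging, but your statement of the key lemma contains a boundary error that makes the biconditional false as written. You require $\gqosprop_1$ only at those $\sigma\in\mathit{prf}(\widetilde{\pi})$ that \emph{strictly} extend $\pi''$, thereby excluding $\sigma=\pi''$ itself. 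The algorithm, however, tests $\gqosprop_1$ at $\pi'\pi''$ on \cref{ln:quntil-base-false-condition} before extending: if $\gqosprop_1$ fails at $\pi'\pi''$ while some one-step extension $\widetilde{\pi}$ of $\pi''$ satisfies $\rlang[\widetilde{\pi}]\in\hat{\rlang}[\aG]$ and $\gqosprop_2$, your right-hand side holds vacuously (there is no $\sigma$ strictly between $\pi''$ and $\widetilde{\pi}$) yet \qUntil\ returns \textbf{false}. The same off-by-one resurfaces when you instantiate $\pi''=\epsilon$: \cref{def:ql-semantics} demands $\gqosprop_1$ at \emph{every} proper prefix of the witness, including $\epsilon$, so your lemma's right-hand side is strictly weaker than the semantics of $\mathcal{U}^{\aG}$. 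The fix is simply to let $\sigma$ range over all extensions of $\pi''$ with $\sigma\neq\widetilde{\pi}$, including $\sigma=\pi''$ (this is exactly what the paper's hypothesis on the prefixes of $\pi''$, combined with the check on \cref{ln:quntil-base-false-condition}, achieves); your own justification of the false base case already argues with this corrected invariant, so your statement and your argument are currently out of sync.

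A second, smaller point: \quo{bottom-up, first \qUntil\ then \qModels} cannot be taken literally, because the base cases of your key lemma invoke \qModels\ on $\gqosprop_1$ and $\gqosprop_2$ and need its correctness on those subformulas before the lemma for \qUntil\ can even be stated in terms of $\models_{\aCS}$. The induction has to be organised as an outer structural induction on the formula, with your well-founded induction on $|\pi|-|\pi'\pi''|$ nested inside the until case; the paper flags this mutual recursion explicitly after \cref{lem:quntil-complete}, and your write-up should acknowledge it as well.
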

The soundness of \qSat immediately follows from the soundness of \qModels
(established in \cref{lem:qmodels} below) which, in turn, relies on the
soundness and completeness of \qUntil (cf.
\cref{lem:quntil-correct,lem:quntil-complete}, respectively).
This guarantees that the call to \qModels in \cref{ln:qsat-qmodels} of
\qSat returns true iff the run $\pi$ satisfies $\gqosprop$.
Note that \qSat is not guaranteed to be complete due to the bound
$k$.

\begin{lemma}[\qModels\ is sound and complete]
  \label{lem:qmodels}
  Given a QoS formula $\gqosprop \in \QL$ and runs $\pi, \pi' \in \Delta_{\aCS}$, 
  where $\pi' \in \mathit{prf}(\pi)$,
    \qModels{$\gqosprop, \aCS,\pi,\pi'$} returns \textbf{true} iff $\conf{\pi, \pi'} \models_{\aCS} \gqosprop$.
\end{lemma}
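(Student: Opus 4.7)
The plan is to prove the statement by structural induction on the $\QL$ formula $\gqosprop$, mirroring the case analysis of the algorithm \qModels against the clauses of the satisfaction relation $\models_{\aCS}$ in \cref{def:ql-semantics}. Since $\pi' \in \mathit{prf}(\pi)$ is preserved in every recursive call (the prefix argument is never changed when recursing on subformulas), the induction hypothesis applies to the same pair $(\pi,\pi')$ throughout.

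For the base cases, the clause $\gqosprop = \top$ is immediate: \cref{ln:qsat-true} returns \textbf{true} and $\top$ holds vacuously. When $\gqosprop$ is an atomic formula $\psi$ of a theory presentation in $\qosspecs$, \cref{ln:qsat-rcf} decides $\aggfn_{\aCS}(\pi') \vdash_{\mathit{RCF}} \psi$, which is exactly the semantic clause for atomic formulae; here I would invoke the decidability of the theory of real-closed fields (Tarski's result recalled in \cref{sec:pre}), noting that $\aggfn_{\aCS}(\pi')$ is a finite set of first-order formulae over the RCF-signature extended with the QoS constants, as produced by the aggregation function of \cref{ex:aggregation-function}. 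The Boolean cases $\neg \gqosprop_1$ and $\gqosprop_1 \lor \gqosprop_2$ follow by applying the induction hypothesis to the recursive calls; the algorithm's combinations with \textbf{not} and \textbf{or} correspond exactly to the semantic clauses.

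The only non-routine case is the until $\gqosprop_1 \mathcal{U}^\aG \gqosprop_2$. Here \qModels delegates entirely to \qUntil invoked with empty extension $\epsilon$ (\cref{ln:qsat-until}). The claim
\[
\qUntil(\gqosprop_1,\aG,\gqosprop_2,\aCS,\pi,\pi',\epsilon) = \textbf{true} \iff \conf{\pi,\pi'} \models_{\aCS} \gqosprop_1 \mathcal{U}^\aG \gqosprop_2
\]
is precisely the content of the two forthcoming \cref{lem:quntil-correct,lem:quntil-complete}, which may be assumed as established separately. Crucially, the recursive calls inside \qUntil itself invoke \qModels on strict subformulas $\gqosprop_1$ and $\gqosprop_2$ and on prefixes of the form $\pi'\pi''$ still in $\mathit{prf}(\pi)$; so the induction hypothesis of the present lemma is available to justify those calls without circularity, since the correctness proof of \qUntil assumes \qModels correct on strictly smaller formulae.

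The main obstacle is therefore not in the Boolean or atomic cases, but in carefully disentangling the mutual recursion between \qModels and \qUntil. I would argue that the overall induction is well-founded by taking as measure the structural size of the $\QL$ formula: every recursive call \qModels makes (including those reached through \qUntil) is on a proper subformula of $\gqosprop$, so the induction hypothesis applies. With this ordering fixed, soundness and completeness of \qModels reduce to the atomic/Boolean clauses (handled above by direct inspection and Tarski's decidability) together with the two cited lemmas on \qUntil.
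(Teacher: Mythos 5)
Your proof is correct and follows essentially the same route as the paper: structural induction on $\gqosprop$, with the $\top$, atomic, and Boolean cases handled by direct inspection (plus Tarski's decidability of RCFs for atoms) and the until case delegated to \cref{lem:quntil-correct,lem:quntil-complete}. Your explicit justification of the mutual recursion between \qModels and \qUntil --- that all calls descend to strictly smaller formulae --- is exactly the remark the paper makes after the three proofs, so nothing is missing.
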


\begin{proof}
  By structural induction on $\gqosprop$.
  If $\gqosprop$ is $\top$, 
  the result follows trivially.
  If $\gqosprop$ is an atomic formula, the algorithm computes the
  aggregation over the run $\pi'$ (\cref{ln:qsat-rcf}) and invokes the
  decision procedure of RCFs to check whether $\aggfn_{\aCS} (\pi')$ entails
  $\gqosprop$ in the theory of real closed fields.
  If $\gqosprop$ is $\gqosprop_1 \lor \gqosprop_2$, the algorithm perform two recursive calls and returns true iff either $\conf{\pi, \pi'} \models_{\aCS} \gqosprop_1$ or $\conf{\pi, \pi'} \models_{\aCS} \gqosprop_2$. 
  If $\gqosprop$ is $\gqosprop_1\ \mathcal{U}^\aG\ \gqosprop_2$, the algorithm returns true iff \qUntil{$\gqosprop_1, \aG, \gqosprop_2,\aCS,\pi,\pi', \epsilon$} returns true. By \cref{lem:quntil-correct,lem:quntil-complete} this is equivalent to $\conf{\pi, \pi'} \models_{\aCS} \gqosprop_1\ \mathcal{U}^\aG\ \gqosprop_2$.
\qed
\end{proof}

We now prove the soundness and completeness of \qUntil.

\begin{lemma}[\qUntil\ is sound]
  \label{lem:quntil-correct}
  Given a QoS formula $\gqosprop_1, \gqosprop_2 \in \QL$, a
  g-choreography $\aG$, and runs $\pi, \pi', \pi''$ such that
  \begin{enumerate}[label = \alph*)] 
  \item
    \label{lem:quntil-correct:hyp-runs-in-sys}
    $\pi'\pi'' \in \mathit{prf}(\pi)$ and $\pi \in \Delta_{\aCS}$, and 
  \item
    \label{lem:quntil-correct:hyp-prefixes}
    for all $\pi''' \in \mathit{prf}(\pi'')$, if $\pi''' \neq \pi''$ then 
    $\conf{\pi, \pi'\pi'''} \models_{\aCS} \gqosprop_1$
  \end{enumerate}
  if \qUntil{$\gqosprop_1, \aG, \gqosprop_2,\aCS,\pi,\pi', \pi''$}
  returns \textbf{true}
  then
  $\conf{\pi, \pi'} \models_{\aCS} \gqosprop_1\ \mathcal{U}^\aG\
  \gqosprop_2$.
\end{lemma}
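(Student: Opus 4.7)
The plan is to argue by induction on the quantity $|\pi| - |\pi'\pi''|$, carried out at a fixed formula, while relying on an outer structural induction on the $\QL$ formula that justifies soundness of \qModels on the strict subformulas $\gqosprop_1$ and $\gqosprop_2$. The only two lines through which \qUntil can return \textbf{true} are line~\ref{ln:quntil-base-true} (base) and line~\ref{ln:quntil-recursive} (recursive); all other return points yield \textbf{false}, so only these two paths need justification.

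For the base case, I would unpack the guard of line~\ref{ln:quntil-base-true-condition}: reaching line~\ref{ln:quntil-base-true} forces $\rlang[\pi''] \in \hat{\rlang}[\aG]$ together with \qModels{$\gqosprop_2,\aCS,\pi,\pi'\pi''$} returning \textbf{true}. Applying \cref{lem:qmodels} to the strict subformula $\gqosprop_2$ yields $\conf{\pi, \pi'\pi''} \models_{\aCS} \gqosprop_2$. Combined with the two standing hypotheses on $\pi''$, this exhibits $\pi''$ as exactly the witness demanded by the semantic clause for $\mathcal{U}^\aG$ in \cref{def:ql-semantics}, giving $\conf{\pi, \pi'} \models_{\aCS} \gqosprop_1\ \mathcal{U}^\aG\ \gqosprop_2$.

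For the recursive case, I would appeal to the inductive hypothesis applied to \qUntil{$\gqosprop_1,\aG,\gqosprop_2,\aCS,\pi,\pi',\pi''\ \TRANSS{\al}\ q$}. Precondition (a) is immediate since line~\ref{ln:quntil-transition} selects the next transition of $\pi$; termination of the inner induction is secured by $|\pi|-|\pi'\pi''\ \TRANSS{\al}\ q| = |\pi|-|\pi'\pi''|-1$, noting that line~\ref{ln:quntil-empty-false} rules out $\pi'\pi''=\pi$. The crux is preserving precondition (b): every proper prefix $\tilde\pi$ of $\pi''\ \TRANSS{\al}\ q$ is either a proper prefix of $\pi''$ (covered by the original precondition (b)) or equals $\pi''$ itself. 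In the latter case, reaching line~\ref{ln:quntil-recursive} means the guard of line~\ref{ln:quntil-base-false-condition} failed, so \qModels{$\gqosprop_1,\aCS,\pi,\pi'\pi''$} returned \textbf{true}; \cref{lem:qmodels} applied to the strict subformula $\gqosprop_1$ then yields $\conf{\pi, \pi'\pi''} \models_{\aCS} \gqosprop_1$, as required.

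The main obstacle will be the apparent circular dependency between \cref{lem:qmodels} and \cref{lem:quntil-correct}: the $\mathcal{U}^\aG$ case of the former invokes the latter, which in turn appeals to the former. I would resolve this by organising the two lemmas as a joint induction on the $\QL$ formula, so that appeals to \qModels on the strict subformulas $\gqosprop_1$ and $\gqosprop_2$ are justified by the outer IH, with an inner induction on $|\pi|-|\pi'\pi''|$ for \qUntil at a fixed formula. Once this scaffolding is in place, the remaining work is routine bookkeeping over which algorithmic guard succeeded or failed.
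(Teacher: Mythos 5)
Your proposal is correct and follows essentially the same route as the paper's proof: a case split on whether \textbf{true} is returned at the base line or via the recursive call, using \cref{lem:qmodels} on $\gqosprop_2$ in the former and re-establishing preconditions \ref{lem:quntil-correct:hyp-runs-in-sys} and \ref{lem:quntil-correct:hyp-prefixes} for the inductive hypothesis in the latter, with the circularity between \cref{lem:qmodels} and \cref{lem:quntil-correct} broken by always descending to strict subformulas. You are merely more explicit than the paper about the induction measure (the inner induction on $|\pi|-|\pi'\pi''|$ and the check that line~\ref{ln:quntil-empty-false} guarantees it decreases) and about why precondition \ref{lem:quntil-correct:hyp-prefixes} is preserved, both of which the paper leaves implicit.
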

\begin{proof}
  The call to
  \qUntil{$\gqosprop_1, \aG, \gqosprop_2,\aCS,\pi,\pi', \pi''$} either
  reaches \cref{ln:quntil-base-true} or it reaches \cref{ln:quntil-recursive}
  and the recursive call returns true.  In the first case, we know
  $\rlang[\pi''] \in \hat{\rlang}[\aG]$ and that
  $\qModels{$\gqosprop_2,\aCS,\pi,\pi'\pi''$}$ returned true.  By
  \cref{lem:qmodels} it follows that
  $\conf{\pi, \pi'\pi''} \models_{\aCS} \gqosprop_2$.  Together with
  hypotheses \ref{lem:quntil-correct:hyp-runs-in-sys} and
  \ref{lem:quntil-correct:hyp-prefixes} the conditions of the
  semantics of the formula $\gqosprop_1\ \mathcal{U}^\aG\ \gqosprop_2$
  (see \cref{def:ql-semantics}) are met.  In the case of reaching
  \cref{ln:quntil-recursive} we know the recursive call
  \qUntil{$\gqosprop_1, \aG, \gqosprop_2,\aCS,\pi,\pi', \pi''\
	 \TRANSS{\al}\ q$} returned true.  
   Conditions
  \ref{lem:quntil-correct:hyp-runs-in-sys} and
  \ref{lem:quntil-correct:hyp-prefixes} applied to the input of the
  recursive call holds because of the way transition $\TRANSS{\al}\ q$
  was chosen and the fact that condition on
  \cref{ln:quntil-base-false-condition} returned false.  Therefore, we
  can take the output of the recursive call to satisfy
  \cref{lem:quntil-correct} as an inductive hypothesis and conclude
  $\conf{\pi, \pi'} \models_{\aCS} \gqosprop_1\ \mathcal{U}^\aG\
  \gqosprop_2$.
\qed
\end{proof}

\begin{lemma}[\qUntil is complete]
  \label{lem:quntil-complete}
  Given a QoS formula $\gqosprop_1, \gqosprop_2 \in \QL$, a g-choreography $\aG$, and runs $\pi, \pi', \pi''$ such that 
  \begin{enumerate}[label = \alph*)] 
    \item 
    \label{lem:quntil-complete:hyp-runs-in-sys}
    $\pi'\pi'' \in \mathit{prf}(\pi)$ and $\pi \in \Delta_{\aCS}$, and 
    \item 
    \label{lem:quntil-complete:hyp-prefixes}
    for all $\pi''' \in \mathit{prf}(\pi'')$, if $\pi''' \neq \pi''$ then 
    either $\rlang[\pi'''] \not\in \hat{\rlang}[\aG]$ or $\conf{\pi, \pi'\pi'''} \nvDash_{\aCS} \gqosprop_2$
  \end{enumerate}
    if \qUntil{$\gqosprop_1, \aG, \gqosprop_2,\aCS,\pi,\pi', \pi''$} returns \textbf{false}
    then $\conf{\pi, \pi'} \nvDash_{\aCS} \gqosprop_1\ \mathcal{U}^\aG\ \gqosprop_2$
\end{lemma}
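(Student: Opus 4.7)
The plan is to mirror the structure of \cref{lem:quntil-correct} and do a well-founded induction on the recursive unfolding of \qUntil. Since each recursive call extends $\pi''$ by one transition of $\pi$ while keeping $\pi'\pi'' \in \mathit{prf}(\pi)$, the induction can be carried out on $|\pi| - |\pi'\pi''|$. My strategy is to suppose, for contradiction, that a witness $\pi^\star$ satisfying the semantic clause for $\gqosprop_1\ \mathcal{U}^\aG\ \gqosprop_2$ exists, and then show that every branch through which \qUntil returns \textbf{false} rules this $\pi^\star$ out, relying on hypothesis~(b) to exclude candidates shorter than $\pi''$ and on the algorithm's internal checks to exclude $\pi''$ itself and all extensions.

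I would begin with the two base cases. At \cref{ln:quntil-base-false}, \cref{lem:qmodels} gives $\conf{\pi, \pi'\pi''} \nvDash_{\aCS} \gqosprop_1$; moreover the condition at \cref{ln:quntil-base-true-condition} failed, so either $\rlang[\pi''] \not\in \hat{\rlang}[\aG]$ or $\conf{\pi, \pi'\pi''} \nvDash_{\aCS} \gqosprop_2$. Combined with hypothesis~(b), this rules out every $\pi^\star$ of length at most $|\pi''|$; any strictly longer $\pi^\star$ would require $\gqosprop_1$ to hold at the proper prefix $\pi''$, contradicting what has just been derived. At \cref{ln:quntil-empty-false} two sub-cases appear. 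If $\pi'\pi'' = \pi$ no extensions exist, so $\pi''$ itself is the only remaining candidate and is ruled out as in the previous case. If instead $\rlang[\pi''\ \TRANSS{\al}\ q] \not\in \rlang[\aG]$, then by prefix-closure of $\rlang[\aG]$ no extension of $\pi''$ can land in $\hat{\rlang}[\aG] \subseteq \rlang[\aG]$, so again no witness beyond $\pi''$ survives, while $\pi''$ itself is excluded by the failure of \cref{ln:quntil-base-true-condition}.

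For the inductive step at \cref{ln:quntil-recursive}, I would verify that hypotheses~(a) and~(b) propagate to the recursive call in which $\pi''$ is replaced by $\pi''\ \TRANSS{\al}\ q$. Condition~(a) transfers immediately because $\TRANSS{\al}\ q$ is picked along $\pi$. For condition~(b), the proper prefixes of $\pi''\ \TRANSS{\al}\ q$ split into proper prefixes of $\pi''$, which are handled by the original hypothesis~(b), and $\pi''$ itself; the latter is covered since the current call did not take \cref{ln:quntil-base-true}, giving $\rlang[\pi''] \not\in \hat{\rlang}[\aG]$ or $\conf{\pi, \pi'\pi''} \nvDash_{\aCS} \gqosprop_2$. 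Applying the inductive hypothesis to the recursive call then yields $\conf{\pi, \pi'} \nvDash_{\aCS} \gqosprop_1\ \mathcal{U}^\aG\ \gqosprop_2$.

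The main obstacle I anticipate is bookkeeping in the inductive step: the extended prefix $\pi''\ \TRANSS{\al}\ q$ absorbs $\pi''$ as a new proper prefix, and one must be sure that the algorithm has already established the relevant exclusion before recursing. This is precisely what the failure of \cref{ln:quntil-base-true-condition} at the current level provides; the only subtlety is combining it with prefix-closure of $\rlang[\aG]$ (and maximality of $\hat{\rlang}[\aG]$) to rule out candidates past the point where the language of $\aG$ has been left. Beyond that, the argument is a straightforward dualisation of the soundness proof.
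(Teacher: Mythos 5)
Your proposal is correct and follows essentially the same route as the paper's proof: a case analysis on which \textbf{false}-returning branch is reached, using hypothesis~(b) together with the failure of the test at \cref{ln:quntil-base-true-condition} to exclude all candidates up to $\pi''$, and an induction on the recursive call (after checking that (a) and (b) propagate) for the remaining case. You even make explicit two details the paper leaves implicit — the well-founded measure $|\pi|-|\pi'\pi''|$ and the prefix-closure argument ruling out extensions once $\rlang[\aG]$ has been left.
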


\begin{proof}
  The call to \qUntil{$\gqosprop_1, \aG, \gqosprop_2,\aCS,\pi,\pi', \pi''$} reaches either \cref{ln:quntil-base-false}, \cref{ln:quntil-empty-false} or it reaches \cref{ln:quntil-recursive} and the recursive call returns false.
  In all cases, by condition \ref{lem:quntil-complete:hyp-prefixes} we know that no prefix of $\pi''$ could be witness of the existential in the semantics of $\gqosprop_1\ \mathcal{U}^\aG\ \gqosprop_2$ (see \cref{def:ql-semantics}) because it would need to both be in $\hat{\rlang}[\aG]$ and satisfy $\gqosprop_2$. Run $\pi''$ itself cannot be the witness for the same reasons due to the fact that condition in \cref{ln:quntil-base-true-condition} was not met. Which means either $\rlang[\pi''] \not\in \hat{\rlang}[\aG]$ or $\qModels{$\gqosprop_2,\aCS,\pi,\pi'\pi''$}$ returned false, therefore, using \cref{lem:qmodels}, either $\rlang[\pi''] \not\in \hat{\rlang}[\aG]$ or $\conf{\pi, \pi'\pi''} \nvDash_{\aCS} \gqosprop_2$.
  The only remaining possibility is for the witness to be a $\pi^\star$ such that $\pi'' \in \mathit{prf}(\pi^\star)$ and
  $\pi^\star \neq \pi''$.
  In the case of reaching \cref{ln:quntil-base-false}, we know that \qModels{$\gqosprop_1,\aCS,\pi,\pi'\pi''$} returned false. 
  By \cref{lem:qmodels} it follows that $\conf{\pi, \pi'\pi''} \nvDash_{\aCS} \gqosprop_1$.
  Therefore, extension $\pi^\star$ couldn't be a witness for the existential in the semantics of $\gqosprop_1\ \mathcal{U}^\aG\ \gqosprop_2$.
  In the case of reaching \cref{ln:quntil-empty-false}, candidate extension $\pi^\star$ does not exist or it is not in $\hat{\rlang}[\aG]$.
  In the case of reaching \cref{ln:quntil-recursive}, we know that \qUntil{$\gqosprop_1, \aG, \gqosprop_2,\aCS,\pi,\pi', \pi''\ \TRANSS{\al}\ q$} returned false.
  Notice that conditions
  \ref{lem:quntil-complete:hyp-runs-in-sys} and 
  \ref{lem:quntil-complete:hyp-prefixes} applied to the input of the recursive calls holds because
  of the way transition $\TRANSS{\al}\ q$ was chosen and
  that condition in \cref{ln:quntil-base-true-condition} was not met.
  Therefore, we can take the output of the recursive calls to satisfy \cref{lem:quntil-complete} as an inductive hypothesis and conclude that $\conf{\pi, \pi'} \nvDash \gqosprop_1\ \mathcal{U}^\aG\ \gqosprop_2$.
\qed
\end{proof}
Notice that the proof for \cref{lem:qmodels} uses \cref{lem:quntil-complete}
and \cref{lem:quntil-correct}, and that the proofs for \cref{lem:quntil-complete}
and \cref{lem:quntil-correct} use \cref{lem:qmodels}.
This does not undermine the soundness of the proofs because the lemmas are 
always (inductively) applied on smaller $\QL$ formulas.
Now that the soundness and completeness of \qModels and \qUntil is established, 
it remains to show their termination. Termination follows from the fact that
both the number of logical operators in $\gqosprop$ and the number of transitions in $\pi$ are finite.
The first guarantees \qModels eventually reaches a base case and the second guarantees \qUntil eventually 
reaches a base case.
Finally, the base case in \qModels, computing aggregation and checking entailment in the theory of real closed 
fields, terminates due to the decidability of RCFs \cite{tarski:RM-109}.

\subsection{A bounded model-checking approach for $\QL$}

Previous results allow for a straightforward bounded
model-checking approach for $\QL$.  Like for other model-checking
procedures for a language that admits negation, \qSat\ can be used to
check validity of a $\QL$ formula in a system $\aCS$ by checking the
satisfiability of the negated formula. This constitutes a
counterexample-finding procedure for $\QL$. The caveat is that
\qSat is a $k$-bounded semidecision procedure rather than a decision
procedure.  However, restricting to $\QL^-$, namely $\QL$ formulae
that do not contain the $\grecop$ operator in their choregraphies, we
 can find finite models of satisfiable formulae of $\QL^-$
(cf. \cref{lem:fmp}). 

\begin{theorem}[Finite model property of $\QL^-$]
  \label{lem:fmp}
  Given a QoS formula $\gqosprop \in \QL^-$, and runs $\pi \in \Delta_{\aCS}^\infty$, $\pi' \in \Delta_{\aCS}$ such that $\pi' \in \mathit{prf}(\pi)$.
  If $\conf{\pi, \pi'} \models_{\aCS} \gqosprop$ then there exists a finite run $\pi^- \in \Delta_{\aCS}$ such that $\pi^- \in \mathit{prf}(\pi)$ and $\conf{\pi^-, \pi'} \models_{\aCS} \gqosprop$.
\end{theorem}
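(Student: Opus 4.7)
The plan is to exploit the fact that, in $\QL^-$, every g-choreography $\aG$ occurring under a modality $\mathcal{U}^\aG$ is iteration-free, hence $\hat{\rlang}[\aG]$ is a finite set of finite words with a well-defined maximal length $L(\aG) = \max\{|w| \sst w \in \hat{\rlang}[\aG]\}$. This bound lets me define a look-ahead $N : \QL^- \to \mathbb{N}$ by structural recursion: $N(\top) = N(\psi) = 0$, $N(\neg \gqosprop) = N(\gqosprop)$, $N(\gqosprop_1 \lor \gqosprop_2) = \max(N(\gqosprop_1), N(\gqosprop_2))$, and $N(\gqosprop_1\,\mathcal{U}^\aG\,\gqosprop_2) = L(\aG) + \max(N(\gqosprop_1), N(\gqosprop_2))$.

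The core of the argument will be a \emph{locality lemma} proved by structural induction on $\gqosprop \in \QL^-$: for every $\pi \in \Delta_{\aCS}^\infty$, every finite $\pi' \in \mathit{prf}(\pi)$, and every finite $\pi^- \in \mathit{prf}(\pi)$ with $\pi' \in \mathit{prf}(\pi^-)$ and either $|\pi^-| - |\pi'| \geq N(\gqosprop)$ or $\pi^- = \pi$, we have $\conf{\pi, \pi'} \models_\aCS \gqosprop$ iff $\conf{\pi^-, \pi'} \models_\aCS \gqosprop$. Once this biconditional is in hand, the theorem is immediate: take $\pi^-$ to be any finite prefix of $\pi$ of length at least $|\pi'| + N(\gqosprop)$, or $\pi$ itself if $\pi$ is already finite and shorter.

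The base cases are trivial because $\top$ and atomic formulas $\psi$ depend only on $\pi'$ through the aggregation $\aggfn_\aCS(\pi')$, not on the surrounding run. Boolean connectives follow directly from the induction hypothesis thanks to the biconditional form of the lemma, which is what makes negation tractable. The interesting case is $\gqosprop_1\,\mathcal{U}^\aG\,\gqosprop_2$: an existential witness $\pi''$ satisfies $|\pi''| \leq L(\aG)$, so $\pi'\pi''$ and each of its proper prefixes lives inside $\pi^-$ by the length budget. Applying the inductive hypothesis to $\gqosprop_1$ and $\gqosprop_2$ at $\pi'\pi''$ and at its prefixes transfers the witness verbatim between the $\pi$ and $\pi^-$ worlds, since $|\pi'\pi''| + N(\gqosprop_i) \leq |\pi'| + L(\aG) + \max(N(\gqosprop_1),N(\gqosprop_2)) \leq |\pi^-|$; the condition $\rlang[\pi''] \in \hat{\rlang}[\aG]$ is a property of the labels of $\pi''$ alone and is therefore insensitive to whether we work in $\pi$ or in $\pi^-$.

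The main obstacle I anticipate is the careful bookkeeping in the until case: (i) handling the edge case where $\pi$ is already finite and strictly shorter than $|\pi'| + N(\gqosprop)$, so that the ``extend by $N(\gqosprop)$'' option is unavailable and we must resort to $\pi^- = \pi$; and (ii) ensuring that the single prefix $\pi^-$ absorbs simultaneously the witness $\pi'\pi''$ for $\gqosprop_2$ and all its proper prefixes needed for $\gqosprop_1$, which is precisely why the bound is additive in $L(\aG)$. A minor preliminary is a routine induction on the syntax of iteration-free g-choreographies showing that $\hat{\rlang}[\gempty]$, $\hat{\rlang}[\gint[]]$, and sequential/parallel/choice compositions all preserve finiteness of $\hat{\rlang}[\cdot]$, which legitimises $L(\aG)$.
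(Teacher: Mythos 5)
Your proposal is correct, but it takes a genuinely different route from the paper's. The paper argues by direct structural induction on $\gqosprop$, producing for each subformula a finite prefix witnessing satisfaction and closing the until case by taking the longest among the finitely many prefixes $\pi_1^-,\pi_2^-$ delivered by the inductive hypothesis; negation is then handled by a separate case analysis on the shape of the immediate subformula ($\neg\psi$, $\neg(\gqosprop_1\lor\gqosprop_2)$, $\neg(\gqosprop_1\,\mathcal{U}^\aG\,\gqosprop_2)$), essentially redoing the induction for the negated semantics. Your locality lemma instead strengthens the induction invariant to a biconditional $\conf{\pi,\pi'}\models_{\aCS}\gqosprop \Leftrightarrow \conf{\pi^-,\pi'}\models_{\aCS}\gqosprop$ for any sufficiently long prefix $\pi^-$, which makes the negation case a one-liner and removes the need for the paper's case analysis; it also yields an explicit, compositionally computed look-ahead $N(\gqosprop)$, which is precisely the kind of bound the paper only alludes to after the theorem when it remarks that the length of $\pi^-$ would be a suitable bound for \qSat\ and would turn it into a decision procedure for $\QL^-$. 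Both arguments hinge on the same key observation --- $\grecop$-freeness makes $\hat{\rlang}[\aG]$ a finite set of finite words, so until-witnesses and their prefixes are finitely many and of bounded length --- but yours packages it more quantitatively and more uniformly. Two points to watch when writing it out in full: (i) the side condition that $\pi'\pi''\in\mathit{prf}(\pi)$ up to a final configuration \emph{in $\pi$} refers to the ambient run, so you must argue it survives the replacement of $\pi$ by $\pi^-$ (the paper's own proof is equally silent here, so this is a shared imprecision rather than a defect of your approach); and (ii) you need $\hat{\rlang}[\aG]$ to be nonempty as well as finite for $L(\aG)$ to be well defined, which your preliminary induction on iteration-free g-choreographies supplies.
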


\begin{proof}
  By structural induction on $\gqosprop$.
  If $\gqosprop$ is $\top$ or an atomic formula, take $\pi^- = \pi'$.
  If $\gqosprop$ is $\gqosprop_1 \lor \gqosprop_2$, we have that either $\conf{\pi, \pi'} \models_{\aCS} \gqosprop_1$ or $\conf{\pi, \pi'} \models_{\aCS} \gqosprop_2$. 
  By inductive hypothesis, either $\conf{\pi_1^-, \pi'} \models_{\aCS} \gqosprop_1$ or $\conf{\pi_2^-, \pi'} \models_{\aCS} \gqosprop_2$ for some finite $\pi_1^-,\pi_2^- \in \mathit{prf}(\pi)$. 
  Therefore, either $\conf{\pi_1^-, \pi'} \models_{\aCS} \gqosprop_1 \lor \gqosprop_2$ or $\conf{\pi_2^-, \pi'} \models_{\aCS} \gqosprop_1 \lor \gqosprop_2$.
  
  If $\gqosprop$ is $\gqosprop_1\ \mathcal{U}^\aG\ \gqosprop_2$, by \cref{def:ql-semantics} we have there exists $\pi''$ such that $\rlang[\pi''] \in \rlang[\aG]$,
   $\pi'\pi'' \in \mathit{prf}(\pi)$ up to a final configuration
   with $\conf{\pi, \pi'\pi''} \models_{\aCS} \gqosprop_2$, and for all $\pi''' \in \mathit{prf} (\pi'')$, if $\pi''' \neq \pi''$ then $\conf{\pi, \pi'\pi'''} \models_{\aCS} \gqosprop_1$.
  If we apply the inductive hypothesis to $\gqosprop_1$ and $\gqosprop_2$, we have
  there exists $\pi''$ such that $\rlang[\pi''] \in \rlang[\aG]$,
  $\pi'\pi'' \in \mathit{prf}(\pi)$ up to a final configuration
  with $\conf{\pi_2^-, \pi'\pi''} \models_{\aCS} \gqosprop_2$ for some 
  $\pi_2^- \in \Delta_{\aCS}$ such that 
  $\pi_2^- \in \mathit{prf}(\pi)$, and for all $\pi''' \in \mathit{prf} (\pi'')$, if $\pi''' \neq \pi''$ then  $\conf{\pi_1^-, \pi'\pi'''} \models_{\aCS} \gqosprop_1$ for some 
  $\pi_1^- \in \Delta_{\aCS}$ such that 
  $\pi_1^- \in \mathit{prf}(\pi)$. 
  Notice that since $\aG$ is $\grecop$-free, run $\pi''$ in the language $\rlang[\aG]$ is necessarily finite
  and so is the number of quantified runs $\pi'''$.
  Therefore, the number of runs $\pi_1^-$ involved in the previous statement is finite, and there is a maximum among their lengths,
  so we can take $\pi^-$ as the longest between $\pi_2^-$ and runs $\pi_1^-$.
  Since $\pi_2^-$ and all the $\pi_1^-$ are prefixes of $\pi$, then they will also be prefixes of $\pi^-$, and therefore we have the conditions to conclude $\conf{\pi^-, \pi'} \models_{\aCS} \gqosprop_1\ \mathcal{U}^\aG\ \gqosprop_2$.
  
  If the outermost operator in $\gqosprop$ is $\neg$, we need to consider al the possible cases for the immediate subformula of $\gqosprop$.
  If $\gqosprop$ is $\neg\psi$ with $\psi$ atomic formula, we have that $\conf{\pi, \pi'} \nvDash_{\aCS} \psi$. 
  Take $\pi^- \in \mathit{prf}(\pi)$ an extension of $\pi'$ whose last configuration is final.
  If $\gqosprop$ is $\neg(\gqosprop_1 \lor \gqosprop_2)$, we have that $\conf{\pi, \pi'} \nvDash_{\aCS} \gqosprop_1 \lor \gqosprop_2$. It follows that $\conf{\pi, \pi'} \nvDash_{\aCS} \gqosprop_1$ and $\conf{\pi, \pi'} \nvDash_{\aCS} \gqosprop_2$. By inductive hypothesis, there exists $\pi_1^- \in \Delta_{\aCS}$ such that $\pi_1^- \in \mathit{prf}(\pi)$ and $\conf{\pi_1^-, \pi'} \nvDash_{\aCS} \gqosprop_1$ and there exists $\pi_2^- \in \Delta_{\aCS}$ such that $\pi_2^- \in \mathit{prf}(\pi)$ and $\conf{\pi_2^-, \pi'} \nvDash_{\aCS} \gqosprop_2$. It is enough to take $\pi^-$ as the longest between $\pi_1^-$ and $\pi_2^-$.
  If $\gqosprop$ is $\neg(\gqosprop_1\ \mathcal{U}^\aG\ \gqosprop_2)$, we have that $\conf{\pi, \pi'} \nvDash_{\aCS} \gqosprop_1\ \mathcal{U}^\aG\ \gqosprop_2$. Therefore, for all $\pi''$ such that $\pi'\pi'' \in \mathit{prf}(\pi)$ up to a final configuration of $\pi$, if $\rlang[\pi''] \in \rlang[\aG]$, and $\conf{\pi, \pi'\pi''} \models_{\aCS} \gqosprop_2$, then there exists $\pi''' \in \mathit{prf} (\pi'')$, with $\pi''' \neq \pi''$ such that $\conf{\pi, \pi'\pi'''} \models_{\aCS} \neg\gqosprop_1$.
  If we apply the inductive hypothesis to $\gqosprop_2$ and $\neg\gqosprop_1$, we have that for all $\pi''$ such that $\pi'\pi'' \in \mathit{prf}(\pi)$ up to a final configuration of $\pi$, if $\rlang[\pi''] \in \rlang[\aG]$, and $\conf{\pi_2^- , \pi'\pi''} \models_{\aCS} \gqosprop_2$ for some $\pi_2^- \in \Delta_{\aCS}$ with $\pi_2^- \in \mathit{prf}(\pi)$, then there exists $\pi''' \in \mathit{prf} (\pi'')$, with $\pi''' \neq \pi''$ such that $\conf{\pi_1^-, \pi'\pi'''} \models_{\aCS} \neg\gqosprop_1$ for some $\pi_1^- \in \Delta_{\aCS}$ with $\pi_1^- \in \mathit{prf}(\pi)$. Notice that since $\aG$ is $\grecop$-free, any run in the language $\rlang[\aG]$ is necessarily finite. Therefore, there is a maximum among the lengths of the runs $\pi_2^-$, and we can take $\pi^-$ as the longest between $\pi_2^-$ and $\pi_1^-$.
\qed
\end{proof}

The proof of Thm.~\ref{lem:fmp} hints that the length of the 
run $\pi^-$ constitutes a suitable bound for \qSat;
which would turn \qSat into a decision procedure for $\QL^-$ 
if one could compute such bound.
Searching for counterexamples of an arbitrary formula $\gqosprop \in \QL$ up to a bounded number of unfoldings
of 
$\grecop$ 
is equivalent to searching for counterexamples in a formula $\hat \gqosprop$ in $\QL^-$ where each 
$\grecop$ 
has been replaced by a finite number of unfoldings. 
Which means that the bounded procedure for searching models of formulae in $\QL^-$ could be used to search for counterexamples of formulae in $\QL$.
Notice that \qSat can be easily extended to return the run that satisfies the formula, if there is one.
Such run can be used to identify the source of QoS formula violations when \qSat is used as a counterexample-finding procedure.
\section{Conclusions}\label{sec:conclu}
We presented a framework for the design and analysis of QoS-aware
distributed message-passing systems using choreographies and a general
model of QoS.
We tackle this problem by:
\begin{inparaenum}[1)]
\item abstractly representing QoS attributes as symbols denoting real
  values, whose behaviour is completely captured by a decidable RCFs
  theory,
\item extending the choreographic model of CFSM by associating QoS
  specifications to each state of the machine,
\item introducing $\QL$, a logic based on $\DLTL$, for
  expressing QoS properties with a straightforward satisfaction
  relation based on runs of communicating systems, and
\item giving a semi-decision procedure for $\QL$ and 
  defining a fragment $\QL^-$ that allowed us
  to give a bounded model-checking procedure for
  the full logic.
\end{inparaenum}
%
%
%
A prototype implementation of our procedure is under development.
It relies on the SMT solver Z3~\cite{demoura:tacas08} for the
satisfiability of the QoS constraints in atomic formulae and on
ChorGram~\cite{cgtCOORD20,chorgram} for the semantics of
g-choreographies and CFSMs.
%
%
%
An interesting by-product of our framework is that it could be used
for the monitoring of local computations to check at run-time if they
stay in the constraint of QoS specifications.
%
%
If static guarantees on QoS specifications are not possible, run-time
monitors can be easily attained by adapting techniques for monitor
generation from behavioural
types~\cite{FMT20,DBLP:journals/tcs/BocchiCDHY17}.

We identify two further main future research directions.
On the one hand, there is the theoretical question of whether $\QL$ is
decidable or not.
In this respect, the similarity of $\QL$ with $\DLTL$
(cf. \cref{sec:rw}) hints towards an affirmative answer suggesting
that the problem can be translated to checking emptyness of B\"{u}chi
automata~\cite{buchi:iclmps62} corresponding to $\QL$ formulae.
However, the decidability of QL is not so easy to attain. 
In general, a communicating system may yield an infinite state space 
due to many reasons so satisfaction might not be possible in a finite 
number of steps. For instance, due to potentially infinite instantiations 
of QoS attributes or that no final configuration might be reachable.
On the other hand, the usability of the framework could be improved
through two extensions of $\QL$ and a less demanding way 
of modeling QoS-extended communicating systems.
The first extension of $\QL$ are \emph{selective} aggregation, enabling the
aggregation of QoS attributes only for some specific states of runs.
This can be done by extending the grammar of g-choreographies given in
\cref{def:g-choreography} with an extra production of the shape
$\aG \bnfdef \cdots \bnfmid \ggrp$, \quo{bracketing} the part of the
choreography relevant for the aggregation. Notice that the run still has to match the whole choreography.
A second extension of $\QL$ is the introduction of \emph{wildcards} as a
mechanism to \quo{ignore} a subchoreography.
Syntactically, it can be represented by, once again, extending the
grammar given in \cref{def:g-choreography} with an extra production,
with shape $\aG \bnfdef \cdots \bnfmid \_\ $, where $\_$ is interpreted
as a wildcard and plays the role of matching any possible
g-choreography. In this case, the shape of the part of the run that matches the wildcard 
is disregarded but attributes are aggregated along the whole run.
Finally, a less demanding way of modeling QoS-extended systems could be achieved 
by extending g-choreographies with QoS specifications annotating specific
interactions and extending the projection of g-choregraphies into CFSMs taking into account
such annotations.




\end{document}
